\newenvironment{proof}[1][Proof]{\begin{trivlist}
\item[\hskip \labelsep {\bfseries #1}]}{\end{trivlist}}
\begin{document}
\title{LSV models with stochastic interest rates and correlated jumps}
\author{Andrey Itkin}
\affil{\small New York University, Tandon School of Engineering}

\date{\today}

\maketitle

\begin{abstract}
Pricing and hedging exotic options using local stochastic volatility models drew a serious attention within the last decade, and nowadays became almost a standard approach to this problem. In this paper we show how this framework could be extended by adding to the model stochastic interest rates and correlated jumps in all three components. We also propose a new fully implicit modification of the popular Hundsdorfer and Verwer and Modified Craig-Sneyd finite-difference schemes which provides second order approximation in space and time, is unconditionally stable and preserves positivity of the solution, while still has a linear complexity in the number of grid nodes.
\end{abstract}

Pricing and hedging exotic options using local stochastic volatility (LSV) models drew a serious attention within the last decade, and nowadays became almost a standard approach to this problem. For the detailed introduction into the LSV among multiple available references we mention a recent comprehensive literature overview in \cite{Homescu2014}. Note, that the same model or its flavors appear in the literature under different names, such as stochastic local volatility model, universal volatility model of \cite{Lipton2002}, unspanned stochastic local volatility model (USLV) of \cite{HalperinItkin2013}, etc.

Despite LSV has a lot of attractive features allowing simultaneous pricing and calibration of both vanilla and exotic options, it was observed that in many situations, e.g., for short maturities, jumps in both the spot price and the instantaneous variance need to be taken into account to get a better replication of the market data on equity or FX derivatives. This approach was pioneered by \cite{Bates:1996b} who extended the Heston model by introducing jumps with finite activity into the spot price (a jump-diffusion model). Then \cite{Lipton2002} further extended this approach by considering local stochastic volatility to be incorporated into the jump-diffusion model (for the extension to an arbitrary L{\'e}vy model, see, e.g., \cite{Pagliarani2012}). Later  \cite{Sepp2011a,Sepp2011b} investigated exponential and discrete jumps in both the underlying spot price $S$ and the instantaneous variance $v$, and concluded that infrequent negative jumps in the latter are necessary to fit the market data on equity options\footnote{Here we don't discuss this conclusion. However, for the sake of reference note, that this could be dictated by some inflexibility of the Heston model where vol-of-vol is proportional to $v^{0.5}$. More flexible models which consider the vol-of-vol power to be parameter of calibration, \cite{Gatheral2008, Itkin2013}, might not need jumps in $v$. See also \cite{Sepp2014} and the discussion therein.}.
In \cite{Durham2013} a similar approach was proposed to use general jump-diffusion equations for modeling both $S$ and $v$.

Note, that in the literature jump-diffusion models for both $S$ and $v$ are also known under the name SVCJ (stochastic volatility with contemporaneous jumps). These models as applied to pricing American options were intensively studied in \cite{Toivanen2014}, for basket options in \cite{Shirava2013}.

Another way to extend the LSV model is to assume that the short interest rates $r$ could be stochastic. Under this approach jumps are ignored, but instead a system of three Stochastic Differential Equations (SDE) with drifts and correlated diffusions is considered, see \cite{Giese2006,Medvedev2010,Oosterlee2011,Hout3D,Hilpisch2011,Chiarella2013,BL2013} and references therein.

As we have already mentioned, accounting for jumps could be important to calibrate the LSV model to the market data. And making the interest rate stochastic doesn't violate this conclusion. Moreover, jumps in the interest rate itself could be important. For instance, in \cite{Chen2004} a stochastic volatility model with jumps in both rates and volatility was calibrated to the daily data for futures interest rates in four major currencies which provided a better fit for the empirical distributions. Also the results in \cite{Johannes2004} obtained using Treasury bill rates find evidence for the presence of jumps which play an important statistical role. Also in \cite{Johannes2004} it was found that jumps generally have a minor impact on yields, but they are important for pricing interest rate options.

In the FX world there exist some variations of the discussed models. For instance, in \cite{Doffou2001} foreign and domestic interest rates are stochastic with no jumps while the exchange rate is modeled by jump-diffusion. In \cite{cw04} both domestic and foreign rates were represented as a L{\'e}vy process with the diffusion component using a time-change approach. The diffusion components could be correlated in contrast to the jump components.

In the bond market, as shown in \cite{Das2002}, the information surprises result in discontinuous interest rates. In that paper a class of Poisson--Gaussian models of the Fed Funds rate was developed to capture the surprise effects. It was shown that these models offer a good statistical description of a short rate behavior, and are useful in understanding many empirical phenomena. Jump (Poisson) processes capture empirical features of the data which would not be captured by Gaussian models. Also there is strong evidence that the existing Gaussian models would be well-enhanced by jump and ARCH-type processes.

Overall, it would be desirable to have a model where the LSV framework could be combined with stochastic rates and jumps in all three stochastic drivers. We also want to treat these jumps as general L{\'e}vy processes, so not limiting us by only the jump-diffusion models. In addition, we consider Brownian components to be correlated as well as the jumps in all stochastic drivers to be correlated, while the diffusion and jumps remain uncorrelated. Finally, since such a model is hardly analytically tractable when parameters of the model are time-dependent (which is usually helpful to better calibrate the model to a set of instruments with different maturities, or to a term-structure of some instrument), we need an efficient numerical method for pricing and calibration.

For this purpose in this paper we propose to exploit our approach first elaborated on in \cite{ItkinLipton2014} for modeling credit derivatives. In particular, in the former paper we considered a set of banks with mutual interbank liabilities whose assets are driven by correlated L{\'e}vy processes. For every asset, the jumps were represented as a weighted sum of the common and idiosyncratic parts. Both parts could be simulated by an arbitrary L{\'e}vy model which is an extension of the previous approaches where either the discrete or exponential jumps were considered, or a L{\'e}vy copula approach was utilized. We provided a novel efficient (linear complexity in each dimension) numerical (splitting) algorithm for solving the corresponding 2D and 3D jump-diffusion equations, and proved its convergence and second order of accuracy in both space and time. Test examples were given for the Kou model, while the approach is in no way limited by this model.

In this paper we demonstrate how a similar approach can be used together with the Metzler model introduced by \cite{Schoutens98, Schoutens01}. It is built based on the Meixner distribution which belongs to the class of the infinitely divisible distributions. Therefore, it gives rise to a L{\'e}vy process - the Meixner process. The Meixner process is flexible and analytically tractable, i.e. its pdf and CF are known in closed form (in more detail see, e.g., \cite{Itkin2014a} and references therein). The Meixner model is known to be rich and capable to be calibrated to the market data. Again, this model is chosen only as an example, because, in general, the approach in use is rather universal.

We also propose a new fully implicit modification of the popular Hundsdorfer and Verwer and Modified Craig-Sneyd finite-difference schemes which provides second order approximation in space and time, is unconditionally stable and preserves positivity of the solution, while still has a linear complexity in the number of grid nodes. This modification allows elimination of first few Rannacher steps as this is usually done in the literature to provide a better stability (see survey, e.g., in \cite{Hout3D}), and provides much better stability of the whole scheme which is important when solving multidimensional problems.

The rest of the paper is organized as follows. In the next section we describe the model. Section~\ref{SolSect} consists of two subsections. The first one introduces the new splitting method, which treats mixed derivatives terms implicitly, thus providing a much better stability. The second subsection describes how to deal with jumps if one uses the Meixner model. However, by no means this approach is restricted just by this model as, e.g., in \cite{ItkinLipton2014} we used the Kou jump models using the same treatment of the jump terms. So here the Meixner model is taken as another example. Section~\ref{numExamp} presents the results of some numerical experiments where prices of European vanilla and barrier options were computed using these model and numerical method. The final section concludes.

\section{Model}
We consider an LSV model with stochastic interest rates and jumps by introducing stochastic dynamics for variables $S_t,v_t,r_t$. We assume that it could include both diffusion and jumps components, as follows:
\begin{align} \label{sde}
d S_t &= (r_t - q)S_t dt + \sigma_{s}(S_t,t) S_t^c\sqrt{v_t} W_s + S_t d L_{S_t,t}, \\
d v_t &= \kappa_v(t)[\theta_v(t) - v_t] dt + \xi_v v_t^a W_v + v_t dL_{v_t,t}, \nonumber \\
d r_t &= \kappa_r(t)(\theta_r(t) - r_t) dt + \xi_r r_t^b W_r + r_t dL_{r_t,t}. \nonumber
\end{align}
Here $q$ is the continuous dividend, $t$ is the time, $\sigma_s$ is the local volatility function, $W_s, W_v, W_r$ are correlated Brownian motions, such that $<dW_i,dW_j>=\rho_{ij}dt, \ i,j \in [s,v,r]$, $\kappa_v, \theta_v, \xi_v$ are the mean-reversion rate, mean-reversion level and volatility of volatility (vol-of-vol) for the instantaneous variance $v_t, \kappa_r, \theta_r, \xi_r$ are the corresponding parameters for the stochastic interest rate $r_t$, $0 \le a < 2, \ 0 \le b < 2, \ 0 \le c < 2$ are some power constants which are introduced to add additional flexibility to the model as compared with the popular Heston ($a = 0.5$), lognormal ($a = 1$) and 3/2 ($a = 1.5$)
models\footnote{If, however, somebody wants to determine these parameters by calibration, she has to be careful, because having both vol-of-vol and a power constant in the same diffusion term brings an ambiguity into the calibration procedure. Nevertheless, this ambiguity can be resolved if for calibration some additional financial instruments are used, e.g., exotic option prices are combined with the variance swaps prices, see \cite{Itkin2013}.}. Processes $L_s, L_v, L_r$ are pure discontinuous jump processes with generator $A$
\[
A f(x) = \int_{\mathbb{R}}\left( f(x+y) - f(x) - y \mathbf{1}_{|y|<1}\right) \mu (dy),
\]
\noindent with $\mu (dy)$ be a L{\'e}vy measure, and
\[
\int_{|y|>1}e^{y}\mu (dy)<\infty.
\]
At this stage, the jump measure $\mu (dx)$ is left unspecified, so all types of jumps including those with finite and infinite variation, and finite and infinite activity could be considered here.

Following \cite{ItkinLipton2014} we introduce correlations between all jumps as this was done in \cite{Ballotta2014}. They construct the jump process as a linear combination of two independent L{\'e}vy processes representing the systematic factor and the idiosyncratic shock, respectively (see also \cite{ContTankov}). It has an intuitive economic interpretation and retains nice tractability, as the multivariate characteristic function in this model is available in closed form based on the following proposition of \cite{Ballotta2014}:
\begin{proposition} \label{propBallotta}
\label{prop0} Let $Z_t, \ Y_{j,t}, \ j=1,...,n$ be independent L{\'e}vy
processes on a probability space $(Q, F, P )$, with characteristic functions
$\phi_Z(u; t)$ and $\phi_{Y_j}(u; t)$, for $j=1,...,n$ respectively. Then,
for $b_j \in \mathbb{R}, \ j=1,...,n$
\[
X_t = (X_{1,t},..., X_{n,t})^\top = (Y_{1,t} + b_1 Z_t,...,Y_{n,t} + b_n
Z_t)^\top
\]
is a L{\'e}vy process on $\mathbb{R}^n$. The resulting characteristic function
is
\[
\phi_{\mathbf{X}}(\mathbf{u}; t) = \phi_Z\left( \sum_{i=1}^n b_i u_i;
t\right) \prod_{i=1}^n \phi_{Y_j}(u_j;t), \qquad \mathbf{u} \in \mathbb{R}^n.
\]
\end{proposition}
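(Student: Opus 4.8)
The plan is to verify the two assertions of Proposition~\ref{propBallotta} in sequence: first that $X_t$ is a L\'evy process on $\mathbb{R}^n$, and then that its characteristic function has the stated product form. Both facts follow from the fundamental observation that the components of $X_t$ are built as fixed linear combinations of the $n+1$ independent processes $Z_t, Y_{1,t},\dots,Y_{n,t}$.

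For the first claim, I would argue componentwise and then jointly. Each building block $Z_t$ and $Y_{j,t}$ is a L\'evy process, hence has stationary and independent increments and c\`adl\`ag paths with $Z_0 = Y_{j,0} = 0$ almost surely; scaling $Z_t$ by the constant $b_j$ and adding the independent process $Y_{j,t}$ preserves all of these properties, so each $X_{j,t} = Y_{j,t} + b_j Z_t$ is a one-dimensional L\'evy process. For the vector process, stationarity of increments of $\mathbf{X}$ follows because an increment $\mathbf{X}_{t+h} - \mathbf{X}_t$ is an affine image of the jointly independent increments $(Z_{t+h}-Z_t, Y_{1,t+h}-Y_{1,t},\dots)$, whose joint law depends only on $h$; independence of increments over disjoint time intervals follows from the independence of increments of each of the $n+1$ underlying processes together with the mutual independence across processes. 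The path regularity and the initial condition $\mathbf{X}_0 = 0$ are immediate. Hence $\mathbf{X}$ is a L\'evy process on $\mathbb{R}^n$.

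For the characteristic function, fix $\mathbf{u} = (u_1,\dots,u_n)^\top \in \mathbb{R}^n$ and compute directly,
\[
\phi_{\mathbf{X}}(\mathbf{u};t)
= \mathbb{E}\Bigl[\exp\Bigl(i\sum_{j=1}^n u_j X_{j,t}\Bigr)\Bigr]
= \mathbb{E}\Bigl[\exp\Bigl(i\sum_{j=1}^n u_j Y_{j,t} + i\Bigl(\sum_{j=1}^n b_j u_j\Bigr) Z_t\Bigr)\Bigr].
\]
Since $Z_t, Y_{1,t},\dots,Y_{n,t}$ are mutually independent, the expectation of the product of exponentials factorizes into the product of expectations, which gives
\[
\phi_{\mathbf{X}}(\mathbf{u};t)
= \mathbb{E}\Bigl[\exp\Bigl(i\Bigl(\sum_{j=1}^n b_j u_j\Bigr) Z_t\Bigr)\Bigr]
\prod_{j=1}^n \mathbb{E}\bigl[\exp(i u_j Y_{j,t})\bigr]
= \phi_Z\Bigl(\sum_{j=1}^n b_j u_j;\, t\Bigr)\prod_{j=1}^n \phi_{Y_j}(u_j;t),
\]
which is exactly the claimed formula (matching the statement's summation index $i$ inside $\phi_Z$, which is a dummy).

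There is no real obstacle here; the result is essentially a bookkeeping exercise once independence is invoked. The only point that deserves a word of care is the measurability/integrability needed to justify the factorization of the expectation --- but characteristic functions are expectations of bounded complex random variables, so Fubini applies without any moment conditions on the $Y_j$ or $Z$, and the argument is unconditional. If one additionally wants the L\'evy--Khintchine triplet of $\mathbf{X}$ (not asked for in the statement, but useful downstream for writing the generator), it can be read off by taking logarithms: the characteristic exponent of $\mathbf{X}$ is $\psi_{\mathbf{X}}(\mathbf{u}) = \psi_Z(\sum_j b_j u_j) + \sum_j \psi_{Y_j}(u_j)$, from which the Gaussian covariance, drift, and L\'evy measure of $\mathbf{X}$ follow by the usual change-of-variables for pushforward measures under the linear map $(z,y_1,\dots,y_n)\mapsto(y_1+b_1 z,\dots,y_n+b_n z)$.
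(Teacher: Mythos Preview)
Your argument is correct and is the standard proof of this well-known construction: verify the L\'evy property of $\mathbf{X}$ from the linear structure and independence of the building blocks, then factor the characteristic function using independence. There is nothing to compare against, however, because the paper does not supply its own proof --- the proposition is quoted verbatim from \cite{Ballotta2014} and used as a black box, so your derivation actually fills in what the paper leaves to the reference.
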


By construction every factor $X_{i,t}, i=1,...,n$ includes a common factor $Z_t$. Therefore, all components $X_{i,t}, i=1,...,n$ could jump together, and loading factors $b_i$ determine the magnitude (intensity) of the jump in $X_{i,t}$ due to the jump in $Z_t$. Thus, all components of the multivariate
L{\'e}vy process $\mathbf{X}_t$ are dependent, and their pairwise correlation is given by (again
see \cite{Ballotta2014} and references therein)
\[
\rho_{j,i} = \mbox{Corr}(X_{j,t}, X_{i,t}) = \dfrac{b_j b_i \mbox{Var}(Z_1)} {\sqrt{\mbox{Var}(X_{j,1})}\sqrt{\mbox{Var}(X_{i,1})}}.
\]
Such a construction has multiple advantages, namely:

\begin{enumerate}
\item As $\mbox{sign}(\rho_{i,j}) = \mbox{sign}(b_i b_j)$, both positive and
negative correlations can be accommodated

\item In the limiting case $b_i \to 0$ or $b_j \to 0$ or $\mbox{Var}(Z_1) =
0 $ the margins become independent, and $\rho_{i,j} = 0$. The other limit $b_i \to \infty$ or $b_j \to \infty$ represents a full positive correlation
case, so $\rho_{i,j} = 1$. Accordingly, $b_i \to \infty, \ b_{3-i} \to
\infty, \ i=1,2$ represents a full negative correlation case as in this
limit $\rho_{i,j} = -1$.
\end{enumerate}
According to this setup, the total instantaneous correlation between the assets $x_i$ and $x_j$ reads \cite{Ballotta2014}
\begin{equation}  \label{corr}
\tilde{\rho}_{ij} = \dfrac{\rho \sigma_i \sigma_j + b_i b_j \mbox{Var}(Z_1)}{\sqrt{\sigma_i^2 + \mbox{Var}(X_{i,1})} \sqrt{\sigma_j^2 + \mbox{Var}(X_{j,1})}}.
\end{equation}

To price contingent claims, e.g., vanilla or exotic options written on the underlying spot price, by using a standard technique as in \cite{ContTankov}, the following multidimensional PIDE could be derived which describes the evolution of the option price $V$ under risk-neutral measure
\begin{equation} \label{PIDE}
V_{\tau }=[\mathcal{D}+\mathcal{J}]V,
\end{equation}
\noindent where $\tau = T - t$ is the backward time, $T$ is the time to the contract expiration, $\mathcal{D}$ is the three-dimensional linear convection-diffusion operator of the form
\begin{align}  \label{Dif}
\mathcal{D} &= F_0 + F_1 + F_2 + F_3, \\
F_1 &= (r - q) S \fp{}{S} + \dfrac{1}{2} \sigma_s^2 S^{2c} v \sop{}{S} - \frac{1}{3}r, \nonumber \\
F_2 &= \kappa_v(t)[\theta_v(t) - v] \fp{}{v} + \dfrac{1}{2} \xi_v^2 v^{2 a} \sop{}{v}  - \frac{1}{3}r, \nonumber \\
F_3 &= \kappa_r(t)[\theta_r(t) - r] \fp{}{r} + \dfrac{1}{2} \xi_r^2 r^{2 b} \sop{}{r}  - \frac{1}{3}r, \nonumber \\
F_0 &= \tilde\rho_{s,v} \sqrt{\Var(S_t) \Var(v_t)} \cp{}{S}{v} + \tilde\rho_{s,r} \sqrt{\Var(S_t) \Var(R_t)} \cp{}{S}{R} \nonumber \\
&+  \tilde\rho_{v,r} \sqrt{\Var(v_t) \Var(R_t)}\cp{}{R}{v}, \nonumber \\
\Var(S_t) &= \sigma^2_s(S,t)S^{2c} v + S^2 (\Var(Y_{s,1}) + b_s^2 \Var(Z(1))), \nonumber \\
\Var(v_t) &= \xi^2_v v^{2a} + v^2(\Var(Y_{v,1}) + b_v^2 \Var(Z(1))), \nonumber \\
\Var(r_t) &= \xi_r^2 r^{2b} + r^2(\Var(Y_{r,1}) + b_r^2 \Var(Z(1))), \nonumber
\end{align}
\noindent and $\mathcal{J}$ is the jump operator
\begin{align}  \label{Jump}
\mathcal{J}V &= \int_{-\infty}^{\infty}\Big[V(x_s + y_s, x_v + y_v, x_r + y_r, \tau)
- V(x_s,x_v,x_r, \tau)  \\
&- \sum_{\chi \in [s,v,r]} (e^{y_\chi} - 1)\fp{V(x_s,x_v,x_r,\tau)}{\chi}\Big] \mu(d y_s d y_v dy_r), \nonumber
\end{align}
\noindent where $\mu(d y_s d y_v dy_r)$ is the three-dimensional L{\'e}vy measure, and $x_s = \log S/S_0, \ x_v = \log v/v_0, \ x_r = \log r/r_0$.

This PIDE has to be solved subject to the boundary and terminal conditions.
We assume that the terminal condition for equity derivatives reads
\[ V(S,v,r,T) = P(S), \]
\noindent where $P(S)$ is the option payoff as defined by the corresponding contract. The boundary conditions could be set, e.g., as in \cite{Hout3D}. However, in the presence of jumps these conditions should be extended as follows. Suppose we want to use finite-difference method to solve the above PIDE and construct a jump grid, which is a superset of the finite-difference grid used to solve the diffusion equation (i.e. when $\mathcal{J} = 0$, see \cite{Itkin2014}). Then these boundary conditions should be set on this jump grid as well as at the boundaries of the diffusion domain.

\section{Solution of the PIDE} \label{SolSect}
To solve \eqref{PIDE} we use a splitting algorithm described in \cite{ItkinLipton2014}. The algorithm provides the second order approximation in time (assuming that at every splitting step the corresponding problem is solved with the same order of approximation) and reads
\begin{align} \label{splitting}
V(\tau + \Delta \tau) &=
e^{0.5 \Delta \tau \mathcal{D}}
e^{0.5 \Delta \tau \mathcal{J}_s}
e^{0.5 \Delta \tau \mathcal{J}_v}
e^{0.5 \Delta \tau \mathcal{J}_r}
e^{\Delta \tau \mathcal{J}_{123}} \\
&\cdot e^{0.5 \Delta \tau \mathcal{J}_r}
e^{0.5 \Delta \tau \mathcal{J}_v}
e^{0.5 \Delta \tau \mathcal{J}_s}
e^{0.5 \Delta \tau \mathcal{D}}V(\tau), \nonumber \\
J_\chi &= \phi_\chi(- i \triangledown_\chi), \ J_{123} = \phi_Z(- i \sum_{\chi \in [x,v,r]} b_\chi \triangledown_\chi), \ \triangledown_\chi \equiv \fp{}{\chi}. \nonumber
\end{align}
Thus, this requires a sequential solution of 9 equations at every time step. The first and the last steps are pure  advection-diffusion problems and could be solved using, e.g., a finite difference method proposed in \cite{Hout3D}. We, however, slightly modified it by replacing an explicit scheme for the mixed derivative operators with the implicit ones. The detailed description of this approach as well as our reasons for doing that are given in the next section.

\subsection{Advection-diffusion problem}

We follow \cite{HoutWelfert2007}, who consider the unconditional stability of second-order finite-difference schemes used to numerically solve multi-dimensional diffusion problems containing mixed spatial derivatives. They investigate the ADI scheme proposed by Craig and Sneyd (see references in the paper), a modified version of Craig and Sneyd's ADI scheme, and the ADI scheme introduced by Hundsdorfer and Verwer. The necessary conditions are derived on the parameters of each of these schemes for unconditional stability in the presence of mixed derivative terms.

For example, let us choose a HV scheme. The main result of \cite{HoutWelfert2007} is that under some mild conditions on the parameter $\theta$ of the scheme, the second-order Hundsdorfer and Verwer (HV) scheme is unconditionally stable when applied to semi-discretized diffusion problems with mixed derivative terms in an arbitrary spatial dimension $k > 2$. For the 3D convection-diffusion problem in \eqref{PIDE} with $\mathcal{J} = 0$, the HV scheme defines an approximation $V_n \approx V(\tau_n)$, $n=1,2,3, \dots$, by performing a series of (fractional) steps:
\begin{align} \label{HV}
Y_0 &= V_{n-1} + \Delta \tau F(\tau_{n-1})V_{n-1}, \\
Y_j &=  Y_{j-1} + \theta \Delta \tau \left[F_j(\tau_n)Y_j - F_j(\tau_{n-1})V_{n-1}\right], \ j = 1,2,3 \dots,k, \nn \\
\tilde{Y}_0 &= Y_0 + \dfrac{1}{2} \Delta \tau \left[F(\tau_n) Y_k - F(\tau_{n-1})V_{n-1}\right], \nn \\
\tilde{Y}_j &= \tilde{Y}_{j-1} + \theta \Delta \tau \left[F_j(\tau_n)\tilde{Y}_j - F_j(\tau_n) Y_k\right], \ j=1,2,3 \dots,k, \nn \\
V_n &= \tilde{Y}_k, \nn
\end{align}
\noindent where $F = \sum_j F_j, \ j=0,1...k$. This scheme is of order two in time  for any value of $\theta$, so this parameter can be chosen to meet additional requirements, \cite{HoutWelfert2007}. An advantage of this scheme is that the fractional steps in line 1 and 3 in \eqref{HV}, which deal with the mixed derivatives, are solved by using an explicit scheme. At the same time this could bring a problem, because a very careful approximation of the mixed derivative term is required to preserve the stability and positivity of the solution\footnote{This is especially important at the first few steps in time because of a step-function nature of the payoff. So a smoothing scheme, e.g., \cite{rannacher}, is usually applied at the first steps, which, however, loses the second order approximation at these steps.}. Sometimes this requires very small step in time to be chosen.

In the 2D case to resolve this rather delicate issue a seven-point stencil for discretization of the mixed-derivative operator that preserves the positivity of the solution was proposed in \cite{toivanen2010, chiarella2008} for correlations $\rho < 0$, and in \cite{IkonenToivanen2008, IkonenToivanen2007} for positive correlations. However, in their schemes the mixed derivative term was treated implicitly (that is the reason they needed a discretized matrix to be an M-matrix). In our case the entire matrix in the right-hand side of steps 3,5 should be either a positive matrix, or a Metzler matrix (in this case the negative of an M-matrix). The latter can be achieved when using approximations of \cite{toivanen2010, chiarella2008} and \cite{IkonenToivanen2008, IkonenToivanen2007} in an opposite order, i.e. use approximations recommended for $\rho > 0$ when $\rho < 0$, and vice versa. However, due to the nature of the 7-point stencil, they are not able to provide a rigorous second order approximation of the mixed derivatives.

In our numerical experiments even using these explicit analogs of the mixed derivatives approximation in the 3D case was not always sufficient. Indeed, either we use real second order approximation of the mixed derivatives relying on the fact that in the HV splitting scheme $F_0$ comes only as a part of $F$. Hence, the negative terms in $F_0$ can be partly or even fully compensated by the other terms. Unfortunately, at some values of the model parameters this could be insufficient to provide the total positivity of the solution. Or we use the 7-points stencil which works well for the implicit scheme (for the reasons which will became clear in Appendix~\ref{appPos} when proving our Theorem), but still doesn't provide a necessary stability for the explicit scheme. Thus, one has to choose a very small step in time, which is impractical.
Therefore, in this paper to provide an additional stability of the whole splitting scheme we modified this step as follows.

The main idea is to sacrifice the simplicity of the explicit representation of the mixed derivative term for the better stability. That is what was done in \cite{toivanen2010,chiarella2008,IkonenToivanen2008, IkonenToivanen2007} who dealt with a 2D case and used an implicit approximation of the mixed derivatives term. However, in this paper we propose another approach.

Below, for the sake of simplicity of the explanation, suppose when solving \eqref{HV} we don't take into account jumps. This assumption could be easily relaxed since jumps add just an additional term into the definition of $\Var$ in \eqref{Dif}. Consider the first step in \eqref{HV}. Since here only the first order approximation in time is necessary, this step can be re-written in two steps
\begin{align} \label{twoStep}
\fp{V^*}{\tau} &= F_0(\tau_{n-1})
\equiv F_{Sv}(\tau_{n-1}) + F_{Sr}(\tau_{n-1}) + F_{vr}(\tau_{n-1}), \\
V(\tau) &= V^*(\tau) + \Delta \tau [F_1(\tau_{n-1}) + F_2(\tau_{n-1}) + F_3(\tau_{n-1})]V^*(\tau), \nonumber
\end{align}
\noindent with
\begin{align} \label{defF0}
F_{Sv}(\tau_{n-1}) &= \rho_{s,v} \sigma_s(S,\tau) S^c \xi_v(\tau) v^{a+0.5} \cp{}{S}{v} \equiv
\rho_{s,v} W(S) W(v) \cp{}{S}{v},  \nonumber \\
F_{Sr}(\tau_{n-1}) &=  \rho_{s,r} \sigma_s(S,\tau)S^c \sqrt{v} \xi_r(\tau) r^b \cp{}{S}{r} \equiv
\rho_{s,r} W(S) W(r)\sqrt{v} \cp{}{S}{r}, \nonumber \\
F_{vr}(\tau_{n-1}) &= \rho_{v,r} \xi_v(\tau) v^{a} \xi_r(\tau) r^b \cp{}{r}{v} \equiv
\rho_{v,r} \dfrac{W(v) W(r)}{\sqrt{v}}  \cp{}{r}{v}, \nonumber
\end{align}
\noindent where $W(S) = \sigma(S,\tau)S^c, \ W(v) = \xi_v(\tau) v^{a+0.5}, \ W(r) = \xi_r(\tau) r^{b}$.

So efficiently at the first sub-splitting step we take a liberty to solve the first equation of \eqref{twoStep} as we like, and the remaining part (the second sub-step) is treated explicitly, e.g. in the same way as in the HV scheme.

Now, a general solution of this first equation in \eqref{twoStep} can be written in the operator form as
\begin{equation*}
V(\tau + \Delta \tau) = e^{\Delta \tau (F_{Sv}(\tau_{n-1}) + F_{Sr}(\tau_{n-1}) + F_{vr}(\tau_{n-1}))}V(\tau).
\end{equation*}
Again, with $O(\dtau)$
\begin{equation*} \label{matrixSol}
V(\tau + \Delta \tau) = e^{\Delta \tau F_{Sv}(\tau_{n-1})} e^{\Delta \tau F_{Sr}(\tau_{n-1})} e^{\Delta \tau F_{vr}(\tau_{n-1})}V(\tau),
\end{equation*}
\noindent or using splitting
\begin{align} \label{expSpl}
V^{(1)} &= e^{\Delta \tau F_{Sv}(\tau_{n-1})}V(\tau), \\
V^{(2)} &= e^{\Delta \tau F_{Sr}(\tau_{n-1})}V^{(1)}, \nonumber \\
V(\tau + \Delta \tau) &= e^{\Delta \tau F_{vr}(\tau_{n-1})}V^{(2)}. \nonumber
\end{align}
The order of the splitting steps usually doesn't matter.

Accordingly, it is sufficient to consider just one step in \eqref{expSpl} since the others can be done in the similar way. For example, below let us consider step 1. First, we use \pade approximation (0,1) which provides approximation of the first line in \eqref{expSpl} with the first order in $\Delta \tau$, and is implicit. Approximation wise this is equivalent to the first line of \eqref{HV}. Having that,
the first equation in \eqref{expSpl} transforms to
\begin{equation} \label{trick1}
\left[1 - \Delta \tau \rho_{s,v} W(S) W(v) \triangledown_S \triangledown_v \right] V^{(1)} = V(\tau).
\end{equation}
Second, we again rewrite it using a trick\footnote{The trick is motivated by the desire to build an ADI scheme which consists of two one-dimensional steps, because for the 1D equations we know how to make the rhs matrix to be an EM-matrix \cite{Itkin2014}.}
\begin{align} \label{trickNeg}
\Big(P &- \sdt \rho_{s,v} W(S)\triangledown_S \Big)\Big(Q + \sdt W(v) \triangledown_v\Big) V^{(1)}  \\
	&= V(\tau) + \left[(P Q-1) - Q\sdt \rho_{s,v} W(S)\triangledown_S + P \sdt W(v) \triangledown_v  \right]V^{(1)}, \nonumber
\end{align}
\noindent where $P,Q,$ are some positive numbers which have to be chosen based on some conditions, e.g., to provide diagonal dominance of the matrices in the parentheses in the lhs of \eqref{trickNeg}, see below.

The intuition for this representation is as follows. Suppose we need to solve some parabolic PDE and represent the solution in the form of a matrix exponential $V(\tau + \dtau) = e^{\dtau J}V(\tau)$. Since computing the matrix exponential might be expensive, to preserve the second order approximation in $\dtau$ one can use a second order \pade approximation. In this case, e.g., a popular Crank-Nicholson scheme preserves positivity of the solution only if the negative diagonal elements $d_{0,i}, \ i=1,...,N$ of $\frac{1}{2}\dtau J$ obey the condition $d_{0,i}(\frac{1}{2}\dtau J) + 1 > 0, \forall i \in (1,...,N)$. This efficiently issues some limitations on the time step $\dtau$. As a resolution, e.g., in \cite{WadeSmoothing} higher order fully implicit \pade approximations were proposed to be used instead of the Crank-Nicholson scheme. This solves the problem with getting a positive solution since
\[ e^{y} = 1 - y  + \frac{1}{2} y^2 + O(y^2) \approx \frac{1}{2}[y - (1+\imath)][y-(1-\imath)], \qquad y \equiv \dtau J, \quad \imath = \sqrt{-1}, \]
\noindent and by using an appropriate discretization each matrix in the parentheses can be made an M-matrix which inverse is a non-negative matrix. This can be done when $J$ is a 1D parabolic operator.
Performance-wise, this, however, gives rise to solving few (e.g., 2 in the case of \pade (0,2) approximation) systems of linear equations with complex numbers. Hence, the complexity of the solution is, at least, 4 times worse. Our representation \eqref{trickNeg} aims to utilize a similar idea, but being transformed to the iterative method. The key point here is that we use a theory of EM-matrices ( see \cite{Itkin2014, Itkin2014b} and references therein), and manage to propose a second order approximation of the first derivative which makes our matrices to be real EM-matrices. So again, the inverse of the latter is a positive matrix.

\eqref{trickNeg} can be solved using fixed-point Picard iterations. One can start with setting $ V(1) = V^0 = V(\tau)$ in the rhs of \eqref{trickNeg}, then solve sequentially two systems of equations
\begin{align} \label{spl2}
\Big(Q &+ \sdt W(v) \triangledown_v\Big) V^*
	= V(\tau) + \Big[P Q -1  - Q\sdt \rho_{s,v} W(S)\triangledown_S
	+ P \sdt W(v)\triangledown_v  \Big]V^k, \nonumber \\
\Big(P &- \sdt \rho_{s,v} W(S) \triangledown_S \Big)V^{k+1} = V^*.
\end{align}
Here $V^k$ is the value of $V^{(1)}$ at the $k$-th iteration.


Before constructing a finite difference scheme to solve this equation
we need to introduce some definitions. Define a one-sided {\it forward} discretization of $\triangledown$, which we denote as $A^F_x: A^F_x C(x) = [C(x+h,t) - C(x,t)]/h$. Also define a one-sided {\it backward} discretization of $\triangledown$, denoted as $A^B_x: \ A^B_x C(x) = [C(x,t) - C(x-h,t)]/h$. These discretizations provide first order approximation in $h$, e.g., $\triangledown C(x) = A^F_x C(x) +  O(h)$.
To provide the second order approximations, use the following definitions. Define $A^C_{2,x} = A^F_x  A^B_x$ - the {\it central} difference approximation of the second derivative $\triangledown^2_x$, and $A^C_x = (A^F_x + A^B_x)/2$ - the {\it central} difference approximation of the first derivative $\triangledown$. Also define a one-sided second order approximations to the first derivatives: {\it backward} approximation  $A^B_{2,x}: \ A^B_{2,x} C(x) = [ 3 C(x) - 4 C(x-h) + C(x-2h)]/(2 h)$, and {\it forward} approximation $A^F_{2,x}: \ A^F_{2,x} C(x) = [ -3 C(x) + 4 C(x+h) - C(x+2h)]/(2 h)$. Also $I_x$ denotes a unit matrix.
All these definitions assume that we work on a uniform grid, however this could be easily generalized for the non-uniform grid as well, see, e.g., \cite{HoutFoulon2010}.

For solving \eqref{spl2} we propose two FD schemes. The first one (Scheme~A) is introduced by the following Propositions\footnote{For the sake of clearness we formulate this Proposition for the uniform grid, but it should be pretty much transparent how to extend it for the non-uniform grid.}:

\begin{proposition} \label{propPos}
Let us assume $\rho_{s,v} \ge 0$, and approximate the lhs of \eqref{spl2} using the following finite-difference scheme:
\begin{align} \label{fd1}
\Big(Q I_v + \sdt W(v) A^B_{2,v}\Big) V^*
	&= \alpha^{+}V(\tau) - V^k, \\
\Big(P I_S - \sdt \rho_{s,v} W(S) A^F_{2,S} \Big)V^{k+1} &= V^*, \nonumber	\\
\alpha^{+} = (PQ + 1)I &- Q\sdt \rho_{s,v} W(S) A_{S}^B + P\sdt W(v) A_{v}^F. \nonumber
\end{align}
Then this scheme is unconditionally stable in time step $\Delta \tau$, approximates \eqref{spl2} with $O(\sqrt{\dtau}\max(h_S,h_v))$ and preserves positivity of the vector $V(x,\tau)$ if $Q = \beta \sdt/h_v, \ P = \beta \sdt/h_S$, where $h_v, h_S$ are the grid space steps correspondingly in $v$ and $S$ directions, and the coefficient $\beta$ must be chosen to obey the condition:
\[
\beta > \max_{S,v}[W(v) + \rho_{s,v} W(S)].
\]
\end{proposition}
\begin{proof}
See Appendix~\ref{appPos}.
\end{proof}
The computational scheme in \eqref{fd1} should be understood in the following way. At the first line of \eqref{fd1} we begin with computing the product $V_1 = \alpha^+ V(\tau)$. This can be done in three steps. First, the product $V_1 = Q\sdt \rho_{s,v} W(S) A_{S}^B V(\tau)$ is computed in a loop on $v_i, i=1,...,N_2$. In other words, if $V(\tau)$ is a $N_1 \times N_2$ matrix where the rows represent the $S$ coordinate and the columns represent the $v$ coordinate, each $j$-th column of $V_1$ is a product of matrix $Q\sdt \rho_{s,v} W(S) A_{S}^B$ and the $j$-th column of $V(\tau)$. The second step is to compute the product $V_2 = P\sdt W(v) A_{v}^F V(\tau)$ which can be done in a loop on $S_i, \ i=1,...,N_1$. Finally, the right hand side of the first line in \eqref{fd1} is $(PQ+1) V(\tau) - V_1 + V_2 - V^k$.
Then in a loop on $S_i, \ i=1,...,N_1$, $N_1$ systems of linear equations have to be solved, each giving a row vector of $V^*$. The advantage of the representation \eqref{fd1} is that the product $\alpha^+ V(\tau)$ can be precomputed.

If $\sdt \approx \max(h_S,h_v)$, then the whole scheme becomes of the second order in space. However, this would be a serious restriction inherent to the explicit schemes. Therefore, in this paper we don't rely on it. Note, that in practice the time step is usually chosen such that $\sdt \ll 1$, and hence the whole scheme is expected to be closer to the second, rather than to the first order in $h$.
As it has been already mentioned, this is similar to \cite{toivanen2010, chiarella2008} for correlations $\rho < 0$, and \cite{IkonenToivanen2008, IkonenToivanen2007} for positive correlations, where a seven point stencil breaks a rigorous second order of approximation in space.

A similar Proposition can be proved in case $\rho_{s,v} \le 0$.
\begin{proposition} \label{propNeg}
Let us assume $\rho_{s,v} \le 0$, and approximate the lhs of \eqref{spl2} using the following finite-difference scheme of the second order in space:
\begin{align} \label{fd2}
\Big(Q I_v + \sdt W(v) A^B_{2,v}\Big) V^*
	&= \alpha^{-}V(\tau) - V^k, \\
\Big(P I_S - \sdt \rho_{s,v} W(S) A^B_{2,S} \Big)V^{k+1} &= V^*, \nonumber	\\
\alpha^{-} = (PQ + 1) I &- Q\sdt \rho_{s,v} W(S) A_{S}^F + P\sdt W(v) A_{v}^F. \nonumber
\end{align}
Then this scheme is unconditionally stable in time step $\Delta \tau$, approximates \eqref{spl2} with $O(\sqrt{\dtau}\max(h_S,h_v))$ and preserves positivity of the vector $V(x,\tau)$ if $Q = \beta \sdt/h_v, \ P = \beta \sdt/h_S$, where $h_v, h_S$ are the grid space steps correspondingly in $v$ and $S$ directions, and the coefficient $\beta$ must be chosen to obey the condition:
\[
\beta > \max_{S,v}[W(v) - \rho_{s,v} W(S)].
\]
\end{proposition}
\begin{proof}
The proof is completely analogous to that given for Proposition~\ref{propPos}, therefore we omit it for the sake of brevity.
\end{proof}

\subsubsection{Rate of convergence of the Picard iterations}

It would be interesting to estimate the rate of convergence of the proposed Picard iterations. For doing so, let us define the following operators
\begin{align} \label{defs}
T_1 &= Q + \sdt W(v) \triangledown_v, \qquad T_2 = P - \sdt \rho_{s,v} W(S) \triangledown_S, \\
T_3 &= PQ + 1 - Q\sdt \rho_{s,v} W(S)\triangledown_S + P\sdt W(v) \triangledown_v. \nonumber
\end{align}
The exact solution $V$ of \eqref{trickNeg} after the transformation described in \eqref{trickNeg1} is applied, can be re-written by using this notation in the form
\begin{equation} \label{exact}
T_1 T_2 V = T_3 V(\tau) + V,
\end{equation}
Also with this notation, the scheme described in Proposition~\ref{propPos} can be represented as
\begin{equation} \label{FDscheme}
T_1 T_2 V^{k+1} = T_3 V(\tau) + V^k.
\end{equation}
Subtracting \eqref{exact} from \eqref{FDscheme} we obtain
\begin{equation} \label{subtract}
V^{k+1} - V = T_2^{-1} T_1^{-1} (V^k-V) \equiv T (V^k-V).
\end{equation}
We can estimate the spectral norm of $T$
\begin{align} \label{norm}
\Vert T \Vert &= \Vert T_2^{-1} T_1^{-1} \vert \le \Vert T_2^{-1}\Vert  \Vert T_1^{-1} \Vert = \left| \dfrac{1}{\min \lambda_{1,i}}\right| \left|\dfrac{1}{ \min \lambda_{2,i}} \right|,
\end{align}
\noindent where $\Vert\cdot\Vert$ denotes the spectral norm of an operator, and $\lambda_{j,i}$ is a set of eigenvalues of the corresponding operator $T_i, \ j \in [1,2,3]$. Based on the Proposition~\ref{propPos}
\begin{align} \label{spNorm2}
\left| \dfrac{1}{\min \lambda_{1,i}}\right| \left|\dfrac{1}{ \min \lambda_{2,i}} \right|
&= \left|\dfrac{\dtau}{h_S h_v}\left(\beta + \min_v \frac{3}{2}W(v)\right)\left(\beta + \min_S \frac{3}{2}\rho_{s,v} W(S)\right)\right|^{-1} \le \dfrac{h_S h_v}{\dtau \beta^2}.
\end{align}
Thus, this map is contraction if $\beta^2 > h_s h_v /\dtau$, so $\Vert T \Vert < 1$.
Moreover, based on \eqref{gamma}, the coefficient $\beta$ can be made big enough by choosing a large value of $\gamma$, and, hence, $\Vert T \Vert$ becomes small. Therefore, the Picard iterations should converge pretty fast. Also the rate of the convergence is linear

\subsubsection{Second order of approximation in space}

The results formulated in the above two Propositions can be further improved by making the whole scheme to be of the second order of approximation in $h_S$ and $h_v$. We call this FD scheme as Scheme~B.

\begin{proposition} \label{propPos2}
Let us assume $\rho_{s,v} \ge 0$, and approximate the lhs of \eqref{spl2} using the following finite-difference scheme:
\begin{align} \label{fd13}
\Big(Q I_v + \sdt W(v) A^B_{2,v}\Big) V^*
	&= \alpha^+_2 V(\tau) - V^k, \\
\Big(P I_S - \sdt \rho_{s,v} W(S) A^F_{2,S} \Big)V^{k+1} &= V^*, \nonumber	\\
\alpha^{+}_2 = (PQ + 1)I &- Q\sdt \rho_{s,v} W(S) A_{2,S}^B + P\sdt W(v) A_{2,v}^F. \nonumber
\end{align}
Then this scheme is unconditionally stable in time step $\Delta \tau$, approximates \eqref{spl2} with $O(\max(h^2_S,h^2_v))$ and preserves positivity of the vector $V(x,\tau)$ if $Q = \beta \sdt/h_v, \ P = \beta \sdt/h_S$, where $h_v, h_S$ are the grid space steps correspondingly in $v$ and $S$ directions, and the coefficient $\beta$ must be chosen to obey the condition:
\[
\beta > \dfrac{3}{2}\max_{S,v}[W(v) + \rho_{s,v} W(S)].
\]
The scheme \eqref{fd13} has a linear complexity in each direction.
\end{proposition}
\begin{proof}
See Appendix~\ref{appPos2}.
\end{proof}

\begin{proposition} \label{propNeg2}
Let us assume $\rho_{s,v} \le 0$, and approximate the lhs of \eqref{spl2} using the following finite-difference scheme of the second order in space:
\begin{align} \label{fd23}
\Big(Q I + \sdt W(v) A^B_{2,v}\Big) V^*
	&= \alpha^-_2 V(\tau) - V^k, \\
\Big(P I_S - \sdt \rho_{s,v} W(S) A^B_{2,S} \Big)V^{k+1} &= V^*, \nonumber	\\
\alpha^{-}_2 = (PQ + 1) I &- Q\sdt \rho_{s,v} W(S) A_{2,S}^F + P\sdt W(v) A_{2,v}^F. \nonumber
\end{align}
Then this scheme is unconditionally stable in time step $\Delta \tau$, approximates \eqref{spl2} with $O(\sqrt{\dtau}\max(h_S,h_v))$ and preserves positivity of the vector $V(x,\tau)$ if $Q = \beta \sdt/h_v, \ P = \beta \sdt/h_S$, where $h_v, h_S$ are the grid space steps correspondingly in $v$ and $S$ directions, and the coefficient $\beta$ must be chosen to obey the condition:
\[
\beta > \dfrac{3}{4}\max_{S,v}[W(v) - \rho_{s,v} W(S)].
\]
The scheme \eqref{fd23} has a linear complexity in each direction.
\end{proposition}
\begin{proof}
The proof is completely analogous to that given for Proposition~\ref{propPos2}, therefore we omit it for the sake of brevity.
\end{proof}

Once again we want to underline that the described approach to deal with the mixed derivative term supplies just the first order approximation in time. But that is exactly what was done in the HV scheme as well. Nevertheless, the whole splitting scheme \eqref{HV} is of the second order in $\Delta \tau$.

The coefficient $\beta$ should be chosen experimentally. In our experiments described in the following sections we used
\begin{equation} \label{beta}
\beta = 10 \max_{S,v}[W(v) - \rho_{s,v} W(S)].
\end{equation}
For the second and third equations in \eqref{expSpl} similar Propositions can be used to solve these equations and guarantee the second order approximation in space, the first order approximation in time and positivity of the solution as well as the convergence of the Picard fixed point iterations. A small but important improvement, however, must be made for the second equation in \eqref{expSpl} since the definition of $F_{Sr}(\tau_{n-1})$ contains $\sqrt{v}$ which is a dummy variable for this equation. Accordingly, as this equation should be solved in a loop on $v_j, j=1,...,N_2$, where $v_j$ are the nodes on the $v$-grid, and $N_2$ is the number of these nodes, for each such a step its own $\beta_j$ must be computed based on the condition
\[ \beta_j > \max_{S,v}[W(v) - \rho_{s,v} W(S)\sqrt{v_j}]. \]
This, however, doesn't bring any problem.

Below, for the sake of convenience, we provide the explicit formulae of the first derivatives for the backward $D^{1}_{2B}$ and forward $D^{1}_{2F}$ approximations of the second order at a non-uniform-grid. They read \cite{Hout3D}
\begin{align*}
D^{1}_{2B} f(x)\Big|_{x = x_i} &=  f(x_i)\dfrac{h_i}{h_{i-1}(h_i + h_{i-1})}
- f(x_{i-1})\dfrac{h_{i-1}+h_i}{h_i h_{i-1}} + f(x_{i-2})\dfrac{h_{i-1} + 2 h_i}{h_{i}(h_{i-1} + h_i)},
\\
D^{1}_{2F} f(x)\Big|_{x = x_i} &= - f(x_i)\dfrac{2 h_{i+1} + h_{i+2}}{h_{i+1} h_{i+2}}
+ f(x_{i+1})\dfrac{h_{i+2}+h_{i+1}}{h_{i+2}h_{i+1}} - f(x_{i+2})\dfrac{h_{i+1}}{h_{i+2}(h_{i+2} + h_{i+1})},
\end{align*}
\noindent where $h_i = x_i - x_{i-1}$. Based on this definition, the matrices $A_2^B, A_2^F$ can be constructed accordingly.

\subsubsection{Fully implicit scheme}

For even better stability, the whole first step \eqref{twoStep} of the HV scheme can be made fully implicit. In doing that observe, that the first line in \eqref{HV} is a \pade approximation (0,1) of the equation
\begin{equation} \label{pdePade}
\fp{V(\tau)}{\tau} = [F_0(\tau) + F_1(\tau) + F_2(\tau) + F_3(\tau)]V(\tau).
\end{equation}
The solution of this equation can be obtained as
\begin{align} \label{twoStepI}
V(\tau) &= \exp\left\{\Delta \tau \left[F_0(\tau_{n-1}) + F_1(\tau_{n-1}) +
F_2(\tau_{n-1}) + F_3(\tau_{n-1})\right]\right\}V(\tau_{n-1}) \\
&= e^{\Delta \tau F_0(\tau_{n-1})}e^{\Delta \tau F_1(\tau_{n-1})}
e^{\Delta \tau F_2(\tau_{n-1})}e^{\Delta \tau F_3(\tau_{n-1})}V(\tau_{n-1})
+ O(\Delta \tau). \nonumber
\end{align}

Alternatively, a \pade approximation (1,0) can also be applied to all exponentials in \eqref{twoStepI} providing same order of approximation in $\Delta \tau$ but making all steps implicit. Namely, this  results to the following splitting scheme of the solution of \eqref{pdePade}:
\begin{align} \label{impl1}
[1 - \Delta \tau F_0(\tau)]V^0 &= V(\tau_{n-1}), \\
[1 - \Delta \tau F_1(\tau)]V^1 &= V^0(\tau_{n-1}), \nonumber \\
[1 - \Delta \tau F_2(\tau)]V^1 &= V^1(\tau_{n-1}), \nonumber \\
[1 - \Delta \tau F_3(\tau)]V(\tau) &= V^2(\tau_{n-1}). \nonumber
\end{align}
We already know how to solve the first step in \eqref{impl1} (which always was a bottleneck for applying this fully implicit scheme). The remaining steps (lines 2-4 in \eqref{impl1}) can be done similar to the steps 2-4 (line 2 in \eqref{HV}) in the HV scheme. Thus, the whole first step in the HV scheme becomes implicit while has the same linear complexity in the number of nodes. Also our experiments confirm that this scheme provides great stability and preserves positivity of the solution. Therefore, running first few Rannacher steps is not necessary.

The third line in \eqref{HV} can be modified accordingly as follows:
\begin{align} \label{IntStep}
\tilde{Y}_0 &= Y_0 + \dfrac{1}{2} \Delta \tau \left[F(\tau_n) Y_k - F(\tau_{n-1})V_{n-1}\right], \\
&= Y_0 + \dfrac{1}{2} \left[Y_3 + \Delta \tau F(\tau_n Y_3\right] - \dfrac{1}{2} \left[V_{n-1} + \Delta \tau F(\tau_{n-1}) V_{n-1}\right] - \dfrac{1}{2}Y_3 + \dfrac{1}{2}V_{n-1} \nonumber \\
&= Y_0 + \dfrac{1}{2}\left[\tilde{Y}_3 - Y_0 - Y_3 + V_{n-1}\right]. \nonumber
\end{align}
Here all values in the rhs of this equation are already known except of $\tilde{Y}_3$ which is the solution of the problem $\fp{Y_3}{\tau} = F(\tau_n) Y_3$. Therefore, it can be solved in the exactly same way as the first step of our fully implicit scheme.

Note that both Scheme~A and Scheme~B can be used as a part of the fully implicit scheme. If one makes a choice in favor of Scheme~A, then the situation is as follows. We have the first step of the fully implicit scheme done with approximation $O(\sqrt{\dtau} \min(h_S,h_v))$, and then multiple steps (six  for the 3D problem) with the second order in space, so the total approximation is expected to be close to two. For the second sweep of the HV scheme this is same. If the choice is made in favor of Scheme~B, the rigorous spatial approximation of the whole HV scheme becomes two.

An obvious disadvantage of the proposed schemes is some degradation of performance, since it requires some number of iterations to converge\footnote{In our experiments 1-2 iterations were sufficient to provide the relative tolerance to be $10^{-6}$.}
when computing the mixed derivatives step, and at every iteration we need to solve 2 systems of linear equations. Despite the total complexity is still linear in the number of nodes, it takes about 4 times more computational time than the explicit scheme. However, as we have already mentioned, in our experiments the explicit scheme at the first and the third steps of \eqref{HV} might suffer from instability (which becomes more pronounced  when the dimensionality of the problem increases), i.e., either it requires a very small and impractical temporal step to converge, or, otherwise, it explodes. Also our results show that the proposed scheme is only about 50-70\% slower than the explicit step of the original HV scheme. However, the time step of our scheme can be significantly increased while still
maintaining the stability of the scheme, whereas this could be problematic for the HV scheme. Therefore, this increase in the time step could compensate the extra time required for doing the first step implicitly. For instance, running one step in time for the 3D advection-diffusion problem using the HV scheme coded in Matlab, at our machine takes 2 secs, while the fully implicit scheme requires 2.6 secs. On contrary, the HV scheme behaves kind of unstable with no Rannacher steps even with $\Delta \tau = 0.005$ yrs, while the fully implicit scheme continues to work well, e.g., with $\Delta \tau = 0.05$ yrs\footnote{Using the same $\beta$ as in the above. However, changing the first multiplier in the rhs of \eqref{beta} can make the scheme working for the higher values of the time step as well.}. So if by the accuracy reason this step is sufficient, it can improve performance by factor 10, and then loosing about 50-70\% for the implicit scheme is not sensitive.

Note, that the same approach can be applied to the forward equations based on the approach of \cite{Itkin2014b}. In more detail this will be presented elsewhere.

Once this step is accomplished, the whole scheme \eqref{HV} becomes positivity-preserving. This is because for the first step of \eqref{HV} our fully implicit scheme does preserve positivity. The next steps can be re-written in the form
\[ [1 - \theta \Delta \ F_j(\tau_n)] Y_j = Y_{j-1} - \theta \Delta \tau F_j(\tau_{n-1}) V_{n-1} \]
So $\Delta \tau$ can always be chosen small enough to make this step positivity-preserving, if the lhs matrix  $M = 1 - \theta \Delta \ F_j(\tau_n)$ is an EM-matrix. The latter can be achieved by taking an approach of \cite{Itkin2014}, where the first spatial derivatives are discretized by using one-sided finite-differences with the second order of approximation. Same is true for the second sweep of the splitting steps in \eqref{HV}.

\subsection{Jump steps}
Going back to the general splitting scheme \eqref{splitting}, observe that the one-dimensional jumps are treated at steps 2-4 and 6-8. Obtaining solutions at these steps for some popular L{\'e}vy models such as Merton, Kou, CGMY (or GTSP), NIG, General Hyperbolic and Meixner ones, could be done as it is shown in \cite{Itkin2014, Itkin2014a}. Efficiency of this method in general is not worse than that of the Fast Fourier Transform (FFT), and in many cases is linear in $N$ - the number of the grid
points. In particular, this is the case for the Merton, Kou, CGMY/GTSP at $\alpha \le 0$ and Meixner models.

In this paper we model all ideosyncratic and common jumps by using the Meixner model. So below we sequentially consider all jump steps of the splitting algorithms in \eqref{splitting}.

\subsubsection{Idiosyncratic jumps}
Remember, that the characteristic exponent of the Meixner process is
\begin{equation} \label{MCE}
\phi(u, a,b,d,m) = 2d \left\{ \log [\cos(b/2)] - \log \left[ \cosh\left( \dfrac{a u - i b}{2} \right) \right] \right\} + i m u,
\end{equation}
\noindent and the L{\'e}vy density $\mu(dy)$ of the Meixner process reads
\begin{equation*}
\mu(dy) = d \dfrac{\exp(b y/a)}{y \sinh(\pi y/a)}d y.
\end{equation*}
Therefore, from \eqref{Jump} we immediately obtain
\begin{equation} \label{MJ}
\mathcal{J} = \phi(-i \triangledown, a,b,d,m) =  2d \left\{ \log [\cos(b/2)] - \log \left[ \cos\left( \dfrac{a \triangledown + b}{2} \right) \right] \right\} + m \triangledown.
\end{equation}

The discretization scheme for this operator which provides second order of approximation in space and time while preserves positivity of the solution is given in \cite{Itkin2014a}, Propositions 3.8, 3.9.

At the end of this section we also remind, that according to the method of \cite{Itkin2014a} the drift term in  \eqref{MJ} (the last one) could be either moved into the drift term of the corresponding advection-diffusion operator, or could be discretized as
\begin{equation} \label{cas2}
e^{\dtau m \triangledown_\chi} =
\begin{cases}
e^{\dtau m A^F_{2,\chi}} + O(h^2_\chi), & m > 0, \\
e^{\dtau m A^B_{2,\chi}} + O(h^2_\chi), & m < 0.
\end{cases}
\end{equation}
This is possible because in both cases in \eqref{cas2} the exponent is the negative of the EM-matrix\footnote{Here EM is an abbreviation for an eventually M-matrix, see \cite{Itkin2014a}.}, therefore $e^{\dtau m \triangledown_\chi}$ is a positive matrix with all eigenvalues $|\lambda_i| < 1$.

\subsubsection{Common jumps}

The most difficult step is to solve the problem
\begin{equation} \label{J123}
V(\tau + \Delta \tau) = e^{\Delta \tau \mathcal{J}_{123}}V(\tau).
\end{equation}
In \cite{ItkinLipton2014} it was demonstrated how to do this when the common jumps are represented by the Kou model using a modification of the Peaceman-Rachford ADI method, see \cite{McDonough2008}. Here we shortly describe the algorithm for the Meixner model.

Remember that by definition $\mathcal{J}_{123}$ is given by \eqref{MJ} where now $\triangledown = b_s \triangledown_s + b_v \triangledown_v + b_r \triangledown_r$. The drift term $m \triangledown$  again can be split among the corresponding drifts of the diffusion operators. After that we need to solve the following equation (\cite{Itkin2014a})
\begin{align} \label{ADIeq}
\prod_{n=1}^\infty M_n^{\kappa}V(\tau &+ \dtau) = [\cos(b/2)]^{\kappa} V(\tau), \\
M_n &= 1 - \dfrac{(a \triangledown + b)^2}{4 \pi^2 (n- 1/2)^2}, \qquad \kappa = 2 d \dtau, \nonumber
\end{align}
\noindent where the parameters $a, b, d$ characterize the common jumps.

This equation can be solved in a loop on $n$. Namely, we start with $n=1$ and take $V_0 = [\cos(b/2)]^\kappa V(\tau)$.
Since in our experiments $0 < \kappa < 1$,\footnote{This can always be achieved by choosing a relatively small $\dtau$.} at every step in $n$ we run this scheme for $\kappa = 0,1$ and then use linear interpolation to $\kappa$. At $\kappa = 0$ an obvious solution is $V(\tau + \dtau) = V(\tau)$. At $\kappa = 1$ \eqref{ADIeq} is a 3D parabolic equation that can be solved using our implicit version of the HV scheme. Indeed, it can be re-written in the form
\[ \left[1 - (\dtau)^2 K_n \left(\triangledown +\frac{b}{a}\right)^2\right]V = V(\tau), \qquad K_n = \dfrac{a^2}{4 \pi^2(n-1/2)^2 (\dtau)^2}. \]
As usually $a$ is small, e.g., in \cite{Schoutens01} $a = 0.04$, so even for $n=1$ $K_n = O(1)$. Now using the \pade approximation theory we can re-write this equation as
\[ V = e^{(\dtau)^2 K_n \left(\triangledown +\frac{b}{a}\right)^2} V(\tau) + O((\dtau)^2). \]
\noindent Therefore, if we omit the last term $O((\dtau)^2)$, the total second order approximation of the scheme in time is preserved. This latter equation is equivalent to
\begin{equation} \label{JumpDif}
\fp{V}{s} = \left(\triangledown +\frac{b}{a}\right)^2 V,
\qquad V(0) = \cos(b/2) V(\tau), \quad s \in [0, T_n],
\end{equation}
\noindent which has to be solved at the time horizon (maturity) $T_n = (\dtau)^2 K_n = \frac{a^2}{4 \pi^2(n-1/2)^2}$. Since $T_1$ is small and usually less than $\dtau$ we may solve it in one step in time. And when $n$ increases, this conclusion remains to be true as well.

Once this solution is obtained we proceed to the next $n$. Thus, this scheme runs in a loop starting with $n=1$ and ending at some $n=M$. Similar to how we did it for the idiosyncratic jumps we choose $M=10$ based on the argument of \cite{Itkin2014a}, namely: i) the high order derivatives of the option price drop down pretty fast in value\footnote{For instance, for the plain vanilla options the option price asymptotically is limited by the intrinsic value, therefore high order derivatives rapidly vanish.}, and ii) first 10 terms of the sum  $\sum_{i=1}^\infty T_i$ approximate the whole sum with the accuracy of 1\%. The solution obtained after $M$ steps is the final solution.

Overall, the whole splitting algorithm contains 11 steps. The complexity of each step is linear in $N$ since at every step we solve some parabolic equation with a tridiagonal or pentadiagonal matrix. Thus, the total complexity of the method is $\varsigma N_1 N_2 N_3$ where $N_i$ is the number of grid nodes in the $i$-th dimension, and $\varsigma$ is some constant coefficient, which is about 276 (18 systems for one diffusion step if the implicit modification of the HV scheme is used times 2 diffusion steps, so totally 36; 10 systems per a 1D jump step times 2 steps times 3 variables, so totally 60; 18 steps per a single 3D parabolic PDE solution for common jumps times 10 steps, so totally 180).

Still this could be better than a straightforward application of the FFT (in case the FFT is
applicable, e.g., the whole characteristic function is known in closed form which is not the case if one takes into account local volatility, etc.) which usually requires the number of FFT nodes to be a power of 2 with a typical value of 2$^{11}$. It is also better than the traditional approach which considers approximation of the linear non-local jump integral $\mathcal{J}$ on some grid and then makes use of the FFT to compute a matrix-by-vector product. Indeed, when using FFT for this purpose we need two sweeps per dimension using a slightly extended grid (with, say, the tension coefficient $\xi$) to avoid wrap-around effects, \cite{Halluin2005b}. Therefore the total complexity per time step could be at least $O( 8 \xi_1 \xi_2 \xi_3 N_1 N_2 N_3 \log_2 (\xi_1 \xi_2 \xi_3 N_1 N_2 N_3))$ which for the FFT grid with $N_1 = N_2 = N_3 = 2048$, and $\xi_1 = \xi_2 = 1.3$ is 2.5 times slower than our method\footnote{As mentioned by the referee, since the flop counts rarely predict accurately an elapsed time, this statement should be further verified.}. Also the traditional approach experiences some other problems for jumps with infinite activity and infinite variation, see survey in \cite{Itkin2014} and references therein. Also as we have already mentioned using Fast Gauss Transform for the common jump step could significantly reduce the time for this most time-consuming piece of the splitting scheme.

\section{Numerical experiments\label{numExamp}}

Due to the splitting nature of our entire algorithm represented by \eqref{splitting}, each step of splitting is computed using a separate numerical scheme. All schemes provide second order approximation in both space and time, are unconditionally stable and preserve positivity of the solution.

In our numerical experiments for the steps which include mixed derivatives terms we used the suggested fully implicit version of the Hundsdorfer-Verwer scheme. This allows one to eliminate any additional damping scheme of the lower order of approximation, e.g., implicit Euler scheme (as this is done in the Rannacher method), or Do scheme with the parameter $\theta=1$ (as this was suggested in \cite{Hout3D}).

A non-uniform finite-difference grid is constructed similar to \cite{HoutFoulon2010} in $v$ and $r$ domains, and as described in \cite{ItkinCarrBarrierR3} in the $S$ domain. In case of barrier options we extended the $S$ grid by adding 2-3 ghost points either above the upper barrier or below the lower barrier, or both with the same boundary conditions as at the barrier (rebate or nothing). Construction of the jump grid, which is a superset of the finite-difference grid used at the first (diffusion) step is also described in detail in \cite{Itkin2014}. Normally the diffusion grid contained 61 nodes in each space direction. The extended jump grid contained extra 20-30 nodes. If a typical spot value at time $t=0$ is $S_0$=100, the full grid started at $S=0$ and ended up at $S=10^3$.

We computed our results in Matlab at a standard PC with Intel Xeon E5620 2.4 Ghz CPU. A typical elapsed time for computing one time step for the pure diffusion model with no jumps is given in the Table~\ref{elapsed}\footnote{Note, that, e.g., for the HV scheme we need 2 sweeps per one step in time.}:
\begin{table}[!htb]
\begin{minipage}{\linewidth}
\begin{center}
\begin{tabular}{|c|c|c|c|c|c|}
\hline
$N$ of nodes & \multicolumn{4}{|c|}{Advection-Diffusion}  \\
\cline{2-5}
             & Mixed der & 1D steps & Total for 1 sweep & k   \\
\hline
50x50x50 & 0.81 & 0.38 &  1.19 &  - \\
\hline
60x60x60 & 1.26 & 0.59 &  1.85 & 2.42   \\
\hline
70x70x70 & 1.88 & 0.86 &  2.74 & 2.54  \\
\hline
80x80x80 & 2.71 & 1.28 &  3.89 & 2.62 \\
\hline
100x100x100 & 4.50 & 2.22 &  3.89 & 3.17 \\
\hline
\end{tabular}
\caption{Elapsed time in secs for 1 step in time to compute the advection-diffusion problem.}
\label{elapsed}
\end{center}
\end{minipage}
\end{table}
Here $k = \log[t_{i}/t_{i-1}]/\log[N_i/N_{i-1}]$ is the power in the complexity $C$ of calculations, which is regressed to $C \propto N^k$. It can be seen that the complexity is almost linear in all dimensions regardless of the number of nodes. The slight growth of $k$ can be attributed to the way how Matlab processes large sparse matrices. Also our C++ implementation is about 15 times faster than Matlab.

\paragraph{European call option} In this test we solved an European call option pricing problem using the described model in a pure diffusion context, hence all jump intensities are set to zero. Also for simplicity we assumed all parameters of the model to be time-independent. Thus, in this test the robustness of our convection-diffusion FD scheme is validated\footnote{In this paper we don't analyze the convergence and order of approximation of the FD scheme, since the convergence in time is same as in the original HV scheme, and approximation was proven by the Theorem. For the jump FD schemes the convergence and approximation are considered in \cite{Itkin2014}.}. Parameters of the model used in this test are given in Table~\ref{Tab13D}\footnote{The values of parameters are taking just for testing, and could differ from those, e.g., obtained by calibrating the model to the current market data.}, and the results are presented in Fig.~\ref{Fig3DXY},\ref{Fig3DXZ},\ref{Fig3DYZ}. We choose $a = b = 0.5, \ c = 1$, and the local volatility function was set to 1, so pretty much in this test our model is a lognormal + double CIR model (with stochastic volatility and stochastic interest rate). The computational domain for the diffusion part is $S \in [0,1000], \ v \in [0,3], \ r \in [0,1]$, and for the jump part - $S \in [0,10000], \ v \in [0,50], \ r \in [0,10]$.

\begin{table}[!htb]
\begin{center}
\scalebox{0.9}{
\begin{tabular}{|c|c|c|c|c|c|c|c|c|c|c|c|c|c|c|c|}
\hline
$T$ & $K$ & $\kappa_V$ & $\xi_v$ & $\theta_v$ & $\kappa_r$ & $\xi_r$ & $\theta_r$ & $q$ & $\rho_{Sv}$
& $\rho_{Sr}$ & $\rho_{vr}$ & $\phi_{Sv}$ & $a$ & $b$ \\
\hline
1 & 100 & 2 & 0.3 & 0.9 & 3 & 0.1 & 0.05 & 0.5 & -0.647 & 0 & 0.1 & $4\pi/5$ & 0.5 & 0.5 \\
\hline
\end{tabular}
}
\caption{Parameters of the test for pricing an European call option.}
\label{Tab13D}
\end{center}
\end{table}

We recall that a correlation matrix $\Sigma$ of $N$ assets can be represented as a Gram matrix
with matrix elements $\Sigma_{ij} = \left<{\bf x}_i,{\bf x}_j\right>$ where ${\bf x}_i, {\bf x}_j$ are unit vectors on a $N-1$ dimensional hyper-sphere $\mathbb{S}^{N-1}$. Using the 3D geometry, it is easy to establish the following cosine law for the correlations between three assets:
\[ \rho_{xy} = \rho_{yz} \rho_{xz} + \sqrt{(1-\rho_{yz}^2)(1-\rho_{xz}^2)}cos(\phi_{xy}), \]
\noindent with $\phi_{xy}$ being an angle between $\bf x$ and its projection on the plane spanned by ${\bf y}, {\bf z}$. As discussed, e.g., by \cite{Dash2004}, three variables $\rho_{yz}, \rho_{zz}, \phi_{xy}$ are independent, but $\rho_{xy}, \rho_{xz}, \rho_{yz}$ are not. Therefore, the value $\rho_{Sv}$ in Table~\ref{Tab13D} was computed using given $\rho_{Sr}, \ \rho_{vr}$ and $\phi_{Sv}$.

Since the whole picture in this case is 4D, we represent it as a series of 3D projections, namely: Fig.~\ref{Fig3DXY} represents the $S-v$ plane at various values of the $r$ coordinate which are indicated in the corresponding labels; Fig.~\ref{Fig3DXZ} does same in the $S-r$ plane, and Fig.~\ref{Fig3DYZ} - in the $v-r$ plane. Based on the values of parameters of the FD scheme the expected discretization error should be about 1\%.

\begin{figure}[!htb]
\begin{center}
\includegraphics[width=0.8\textwidth, height=3in]{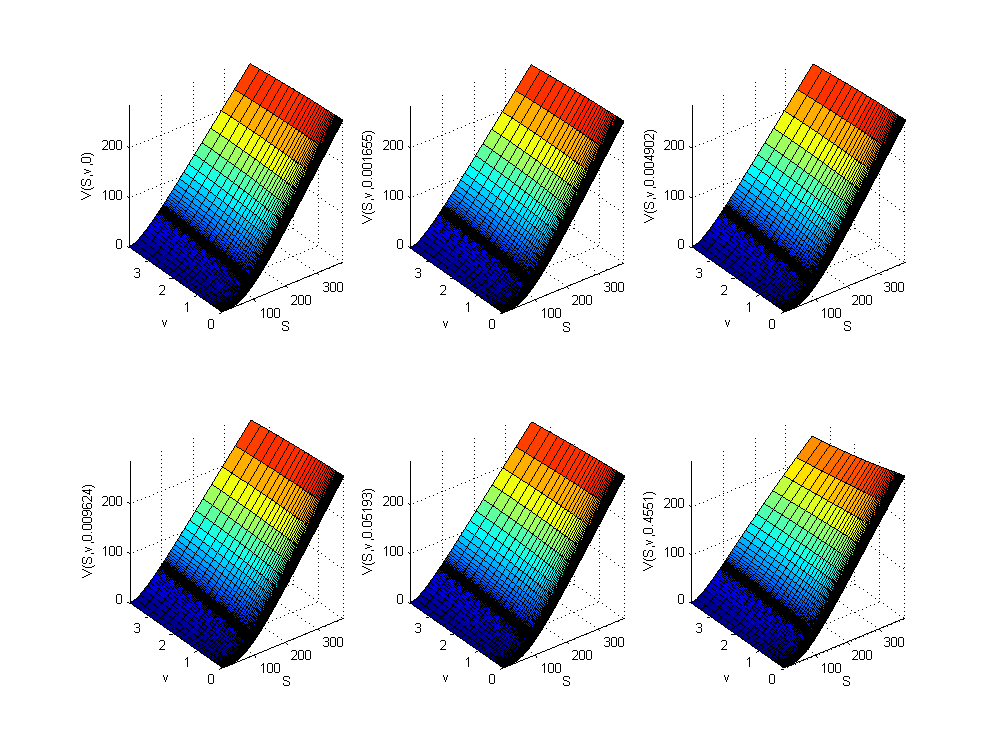}
\caption{European call option prices in $S_0-v_0$ plane at various values of $r_0$.}
\label{Fig3DXY}
\end{center}
\end{figure}

\begin{figure}[!htb]
\begin{center}
\includegraphics[width=0.8\textwidth, height=3in]{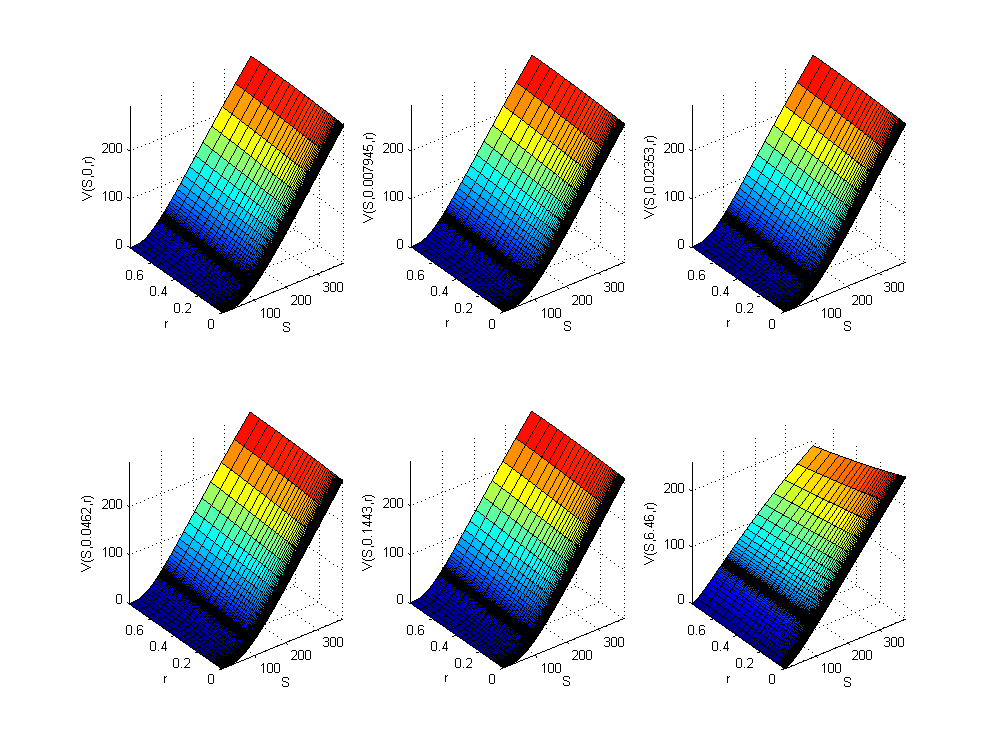}
\caption{European call option prices in $S_0-r_0$ plane at various values of $v_0$.}
\label{Fig3DXZ}
\end{center}
\end{figure}

\begin{figure}[!htb]
\begin{center}
\includegraphics[width=0.8\textwidth, height=3in]{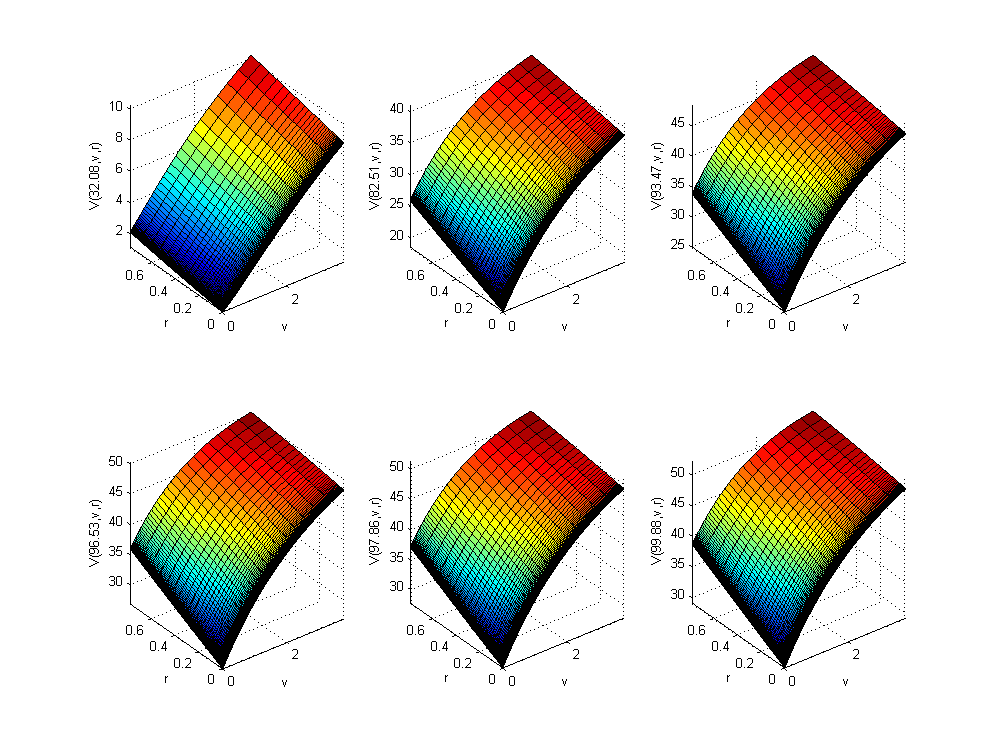}
\caption{European call option prices in $v_0-r_0$ plane at various values of $S_0$.}
\label{Fig3DYZ}
\end{center}
\end{figure}

\paragraph{Double barrier option} In this test we solved a more challenging problem of pricing double barrier option using the same model with no jumps with the lower barrier $L=50$ and the upper barrier $H=130$. Parameters of the model used in this test are given in Table~\ref{Tab23D}, and the results are presented in Fig.~\ref{Fig3BXY},\ref{Fig3BXZ},\ref{Fig3BYZ}.
\begin{table}[!htb]
\begin{center}
\scalebox{0.9}{
\begin{tabular}{|c|c|c|c|c|c|c|c|c|c|c|c|c|c|c|c|}
\hline
$T$ & $K$ & $\kappa_V$ & $\xi_v$ & $\theta_v$ & $\kappa_r$ & $\xi_r$ & $\theta_r$ & $q$ & $\rho_{Sv}$
& $\rho_{Sr}$ & $\rho_{vr}$ & $\phi_{Sv}$ & $\alpha$ & $\beta$ \\
\hline
0.5 & 100 & 2.5 & 0.3 & 0.6 & 0.3 & 0.1 & 0.05 & 0 & -0.587 & 0.3 & 0.4 & $4\pi/5$ & 0.5 & 0.5 \\
\hline
\end{tabular}
}
\caption{Parameters of the test for pricing a Double barrier call option.}
\label{Tab23D}
\end{center}
\end{table}

\begin{figure}[!htb]
\begin{center}
\includegraphics[width=0.8\textwidth, height=3in]{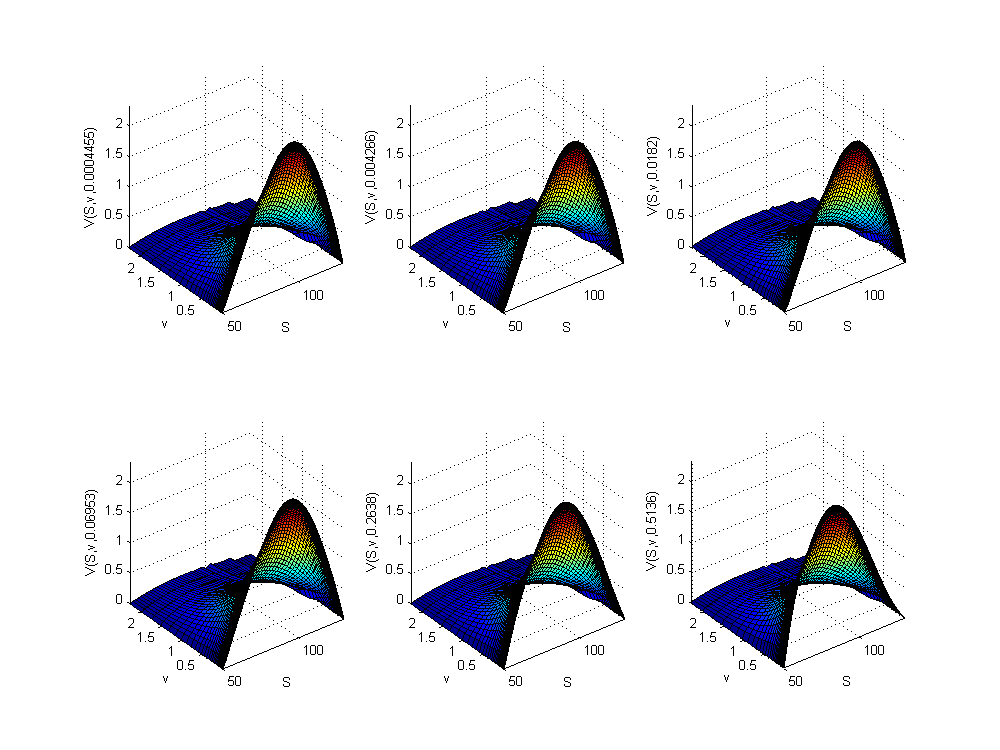}
\caption{Double barrier call option prices in $S_0-v_0$ plane at various values of $r_0$.}
\label{Fig3BXY}
\end{center}
\end{figure}

\begin{figure}[!htb]
\begin{center}
\includegraphics[width=0.8\textwidth, height=3in]{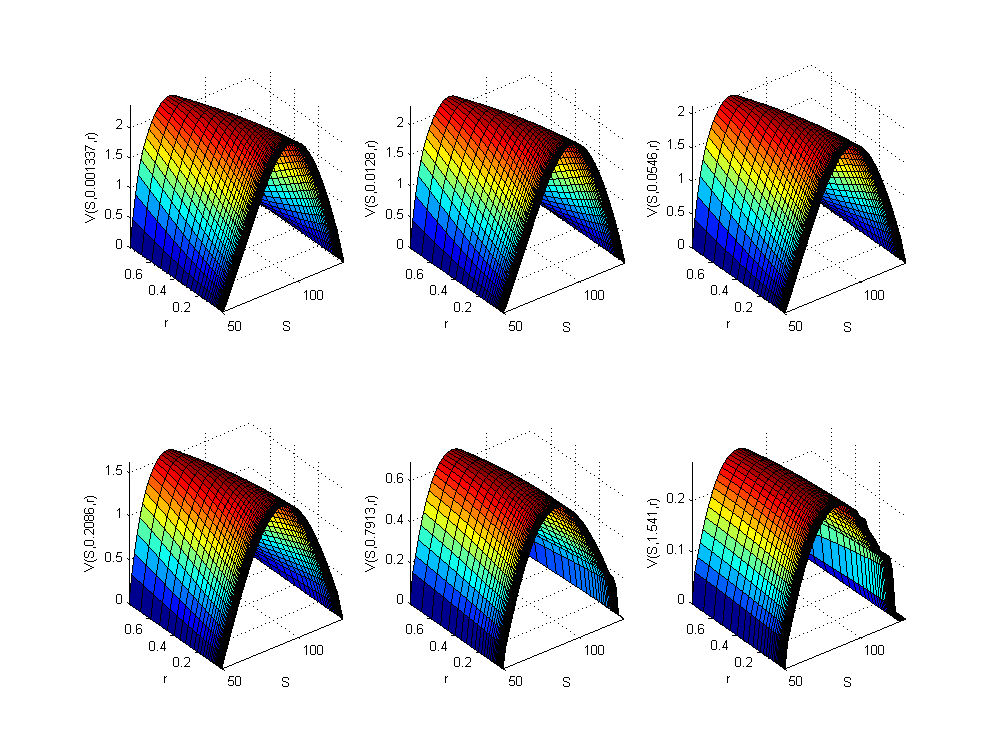}
\caption{Double barrier call option prices in $S_0-r_0$ plane at various values of $v_0$.}
\label{Fig3BXZ}
\end{center}
\end{figure}

\begin{figure}[!htb]
\begin{center}
\includegraphics[width=0.8\textwidth, height=3in]{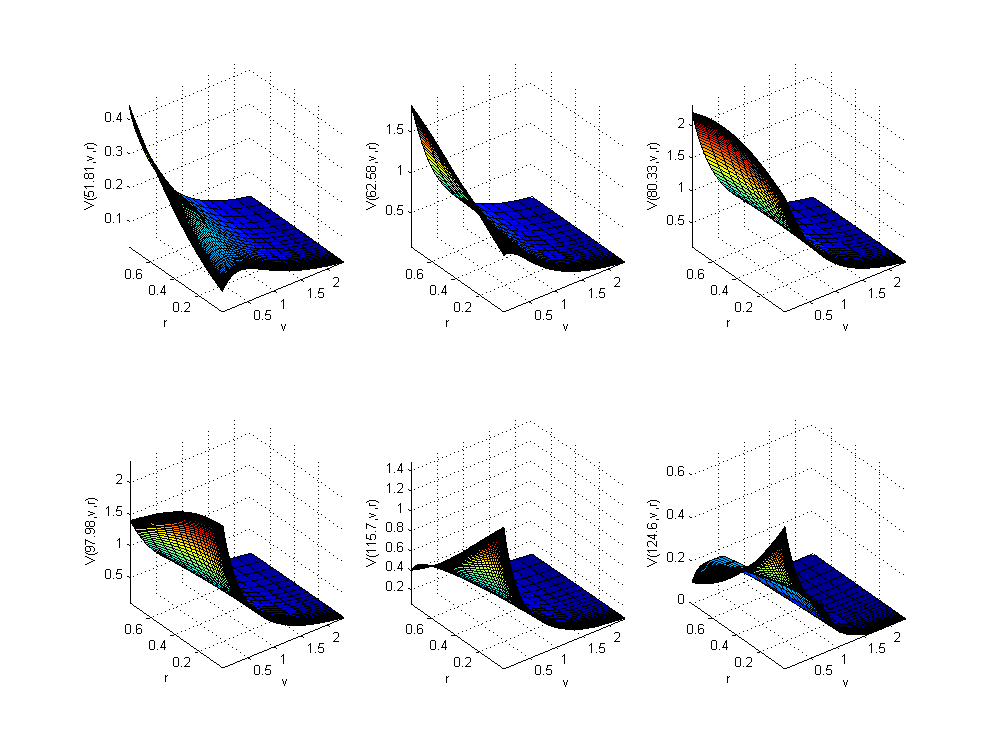}
\caption{Double barrier call option prices in $v_0-r_0$ plane at various values of $S_0$.}
\label{Fig3BYZ}
\end{center}
\end{figure}

It can be observed that the damping properties of the fully implicit HV scheme seem sufficient. No spurious oscillations are observed, and the solution is monotone even near the critical points (close to strike and both barriers in $S$ space). However, some minor oscillations occur at high $v$ close to the upper barrier.

\paragraph{Up-and-Out call option with jumps} The third test deals with jumps using the Meixner model for both idiosyncratic and common jumps as was described in the previous section. Parameters of the jump processes are given in Table~\ref{jumpPar}, while the remaining parameters are the same as in Table~\ref{Tab23D}. The loading factors (as they are defined in Proposition~\ref{propBallotta}) we used in the test are: $b_S = 1, \ b_v = 2, \ b_r = 3$.
\begin{table}[!ht]
\begin{center}
\begin{tabular}{|c|c|c|c|c|}
\hline
Driver & $a$ & $b$ & $m$ & $d$  \\
\hline
$S$ & 0.04 & -0.33 & 0.1 & 52  \\
\hline
$v$ & 0.02 & -0.5 & 0.03 & 40 \\
\hline
$r$ & 0.01 & -0.2 & 0.01 & 30 \\
\hline
Common jumps & 0.03 & -0.1 & 0.05 & 40 \\
\hline
\end{tabular}
\caption{Parameters of the jump models.}
\label{jumpPar}
\end{center}
\end{table}

A typical elapsed time for computing one time step for the pure jump model is given in Table~\ref{elapsedJ}. Here we define the power $k$ assuming that the complexity $C$ is proportional to $(N_1 N_2 N_3)^\kappa$, so $\kappa$ can be found as
\[ \kappa = \log\left( \frac{t_i}{t_{i-1}}\right)\Big/
\log\left(\frac{N_{1i} N_{2i} N_{3i}}{N_{1,i-1} N_{2,i-1} N_{3,i-1}}\right). \]
One can see that in all experiments $\kappa$ is close to 1, so the complexity is linear in the number of nodes.

The results computed in this experiment are presented in Fig.~\ref{Fig3JXY},\ref{Fig3JXZ},\ref{Fig3JYZ} as a difference between the full case with the correlated jumps and diffusion and that with no jumps. It is clear that jumps can play a significant role changing the whole 4D profile of the option price. Varying the loading factors one can change the correlations between jumps, and thus affect the price in a significant degree. For instance, increasing all the loading factors in this experiment by factor 10 results to the decrease of the Up-and-Out option price almost to few cents. Thus, the proposed model is very flexible. However, calibration of all the model parameters can be very time-consuming. Therefore, it is better to calibrate various pieces of the model separately, as this was discussed, e.g., in \cite{Ballotta2014}. Namely, the idiosyncratic jumps first can be calibrated separately to some marginal distributions using the appropriate instruments. Then the parameters of the common  jumps can be calibrated to the option prices, while keeping parameters of the idiosyncratic jumps fixed.

\begin{table}[!htb]
\begin{center}
\begin{tabular}{|l|r|r|r|c|}
\hline
& \multicolumn{4}{|c|}{Jumps}  \\
\cline{2-5}
$N$ of nodes & Common & all 1D & $T_{1s}$ & $\kappa$   \\
\hline
114x95x84 & 70.6 & 3.26 & 77.1  &  - \\
128x95x84 & 80.1 & 4.72 & 89.5  &  1.29 \\
142x95x84 & 84.9 & 5.22 & 95.3  &  0.60 \\
156x95x84 & 91.7 & 5.83 & 103.4  & 0.87  \\
\hline
114x95x84 & 70.6 & 3.26 & 77.1  &  - \\
114x109x84 & 80.3 & 4.63 & 89.9  & 1.12  \\
114x123x84 & 91.9 & 5.30 & 102.5 & 1.09   \\
114x136x84 & 101.6 & 5.92 & 113.4 & 1.01  \\
\hline
114x95x84 & 70.6 & 3.26 & 77.1  &  - \\
114x95x98 & 79.7 & 4.69 & 89.1  & 0.94 \\
114x95x111 & 89.3 & 5.27 & 99.8 & 0.91 \\
114x95x123 & 98.2 & 5.88 & 110.0 & 0.95 \\
\hline
\end{tabular}
\caption{Elapsed time in secs for 1 full time step in time $T_{1s}$ to compute the 3D jump problem.}
\label{elapsedJ}
\end{center}
\end{table}

\begin{figure}[!htb]
\begin{center}
\includegraphics[width=0.8\textwidth, height=3in]{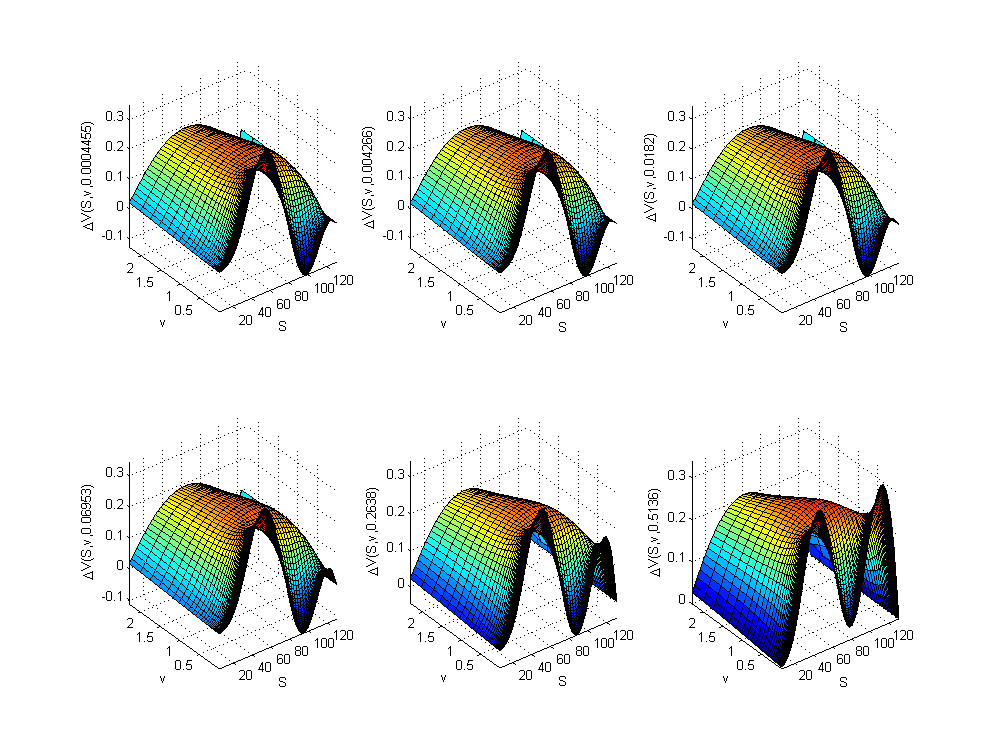}
\caption{Difference in Up-and-Out call option prices computed with and without jumps in $S_0-v_0$ plane at various values of $r_0$.}
\label{Fig3JXY}
\end{center}
\end{figure}

\begin{figure}[!htb]
\begin{center}
\includegraphics[width=0.8\textwidth, height=3in]{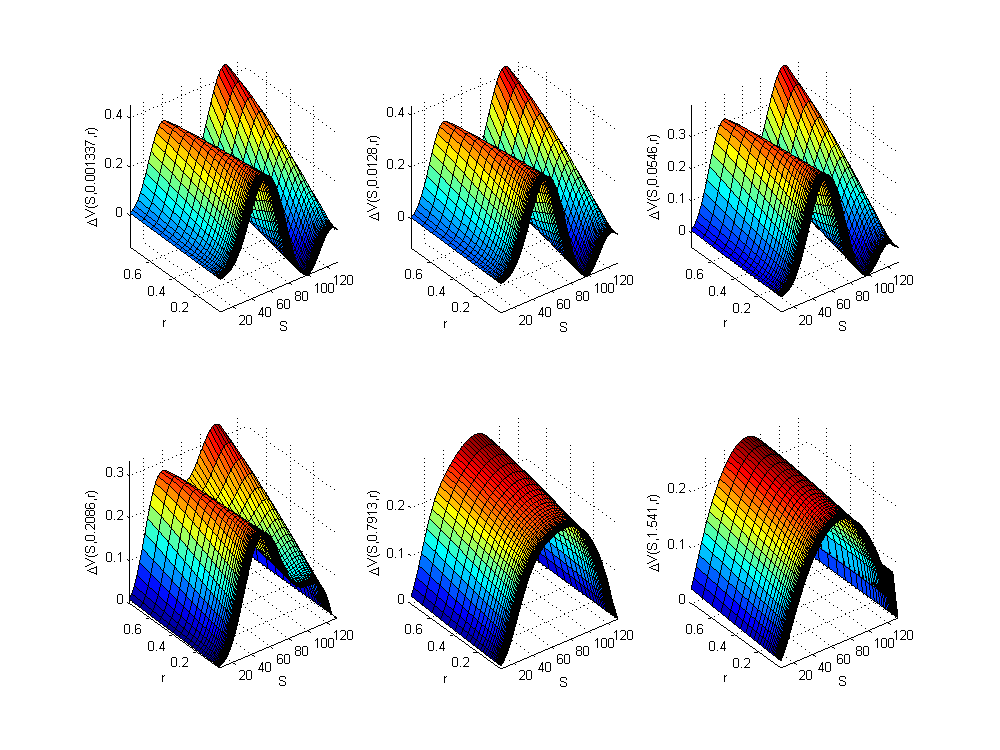}
\caption{Difference in Up-and-Out call option prices computed with and without jumps in $S_0-r_0$ plane at various values of $v_0$.}
\label{Fig3JXZ}
\end{center}
\end{figure}

\begin{figure}[!htb]
\begin{center}
\includegraphics[width=0.8\textwidth, height=3in]{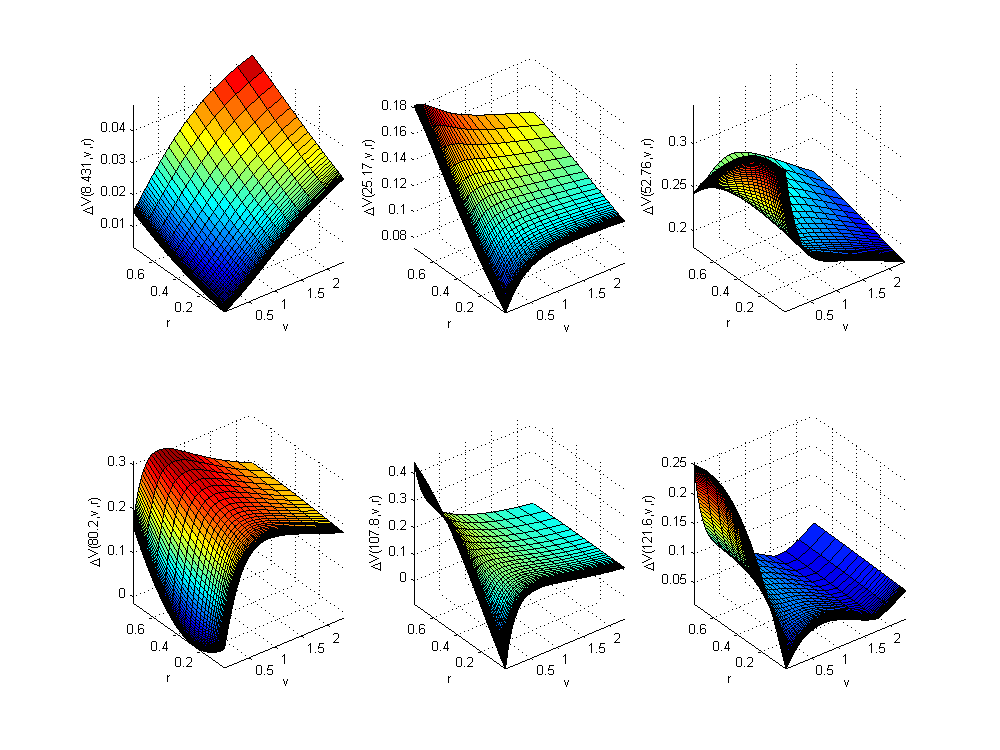}
\caption{Difference in Up-and-Out call option prices computed with and without jumps in $v_0-r_0$ plane at various values of $S_0$.}
\label{Fig3JYZ}
\end{center}
\end{figure}

\clearpage
\section{Conclusion}
In this paper we apply the approach of \cite{ItkinLipton2014} for pricing credit derivatives to various option pricing problems (vanilla and exotic) where as an underlying model we use Local stochastic volatility model with stochastic interest rates and jumps in every stochastic driver. It is important that all jumps as well as the Brownian motions are correlated. Here we solve just the backward problem (solving the backward Kolmogorov equation, e.g., for pricing derivatives), while the forward problem (solving the forward Kolmogorov equation to find the density of the underlying process) can be treated in a similar way, see \cite{Itkin2014b}.

In \cite{ItkinLipton2014} test examples were given for the Kou and Merton models, while the approach is in no way limited by these models. Therefore, in this paper we demonstrate how a similar approach can be used together with the Meixner model. Again, this model is chosen only as an example, because, in general, the approach in use is rather universal. We provide an algorithm and results of numerical experiments.

The second contribution of the paper is a new fully implicit modification of the popular Hundsdorfer and Verwer and Modified Craig-Sneyd finite-difference schemes which provides second order approximation in space and time, is unconditionally stable and preserves positivity of the solution, while still keeps a linear complexity in the number of grid nodes. This scheme has extended damping properties, and, therefore,
allows an elimination of any additional damping scheme of a lower order of approximation, e.g., implicit Euler scheme (as this is done in the Rannacher method), or Do scheme with the parameter $\theta=1$ (as this was proposed in \cite{Hout3D}). We prove unconditional stability of the scheme, second order of approximation in space and time and positivity of the solution. The results of our numerical experiments demonstrate the above conclusions.

To the best of author's knowledge both approaches have not been considered yet in the literature, so the main results of the paper are new.

The model in use is rather general, in a sense that if considers two (or even three) CEV processes for all the diffusion components and a wide class of the L{\'e}vy processes for the jump components. Therefore, a stable, accurate and sufficiently fast finite-difference approach for pricing derivatives using this model, which is proposed in this paper, could be beneficial for practitioners.

\section*{Acknowledgments}
We thank Peter Carr, Alex Lipton and Alex Veygman for their useful comments and discussions.
Also comments and suggestions of two anonymous referees are highly appreciated.
We assume full responsibility for any remaining errors.


\newcommand{\noopsort}[1]{} \newcommand{\printfirst}[2]{#1}
  \newcommand{\singleletter}[1]{#1} \newcommand{\switchargs}[2]{#2#1}


\appendix
\section{Proof of Proposition \protect\ref{propPos} \label{appPos}}
Recall that for a positive correlation $\rho_{s,v} \ge 0$ we want to prove that the finite-difference scheme:
\begin{align} \label{ap1}
\Big(Q I_v + \sdt W(v) A^B_{2,v}\Big) V^* &= \alpha^{+}V(\tau) - V^k, \\
\Big(P I_S - \sdt \rho_{s,v} W(S) A^F_{2,S} \Big)V^{k+1} &= V^*. \nonumber	\\
\alpha^{+} = (PQ+1) I &- Q\sdt \rho_{s,v} W(S) A_{S}^B + P\sdt W(v) A_{v}^F. \nonumber
\end{align}
\noindent is unconditionally stable in time step $\Delta \tau$, approximates \eqref{spl2} with $O(\sqrt{\dtau}\max(h_S,h_v))$ and preserves positivity of the vector $V(x,\tau)$ if $Q = \beta \sdt / h_v, P = \beta \sdt/h_S$, where $h_v, h_S$ are the grid space steps correspondingly in $v$ and $S$ directions, and the coefficient $\beta$ must be chosen to obey the condition:
\[
\beta > \max_{S,v}[W(v) + \rho_{s,v} W(S)].
\]

First, let us show how to transform \eqref{trickNeg} to \eqref{ap1}. Observe, that \eqref{trickNeg} can be re-written in the form
\begin{align} \label{trickNeg1}
\Big(P &- \sdt \rho_{s,v} W(S)\triangledown_S \Big)\Big(Q + \sdt W(v) \triangledown_v\Big) V^{(1)}
= V(\tau) - V^{(1)} + \alpha V^{(1)} \nonumber \\
&= (\alpha + 1)V(\tau) - V^{(1)} + \alpha[V^{(1)} - V(\tau)], \\
\alpha &= PQ - Q\sdt \rho_{s,v} W(S) \triangledown_S + P\sdt W(v) \triangledown_v. \nonumber
\end{align}
According to \eqref{expSpl}, $V^{(1)} - V(\tau) = \Delta \tau F_{Sv}(\tau) + O\left((\Delta \tau)^2\right)$. Also based on the proposition statement, $P \propto \sdt, \ Q \propto \sdt$, therefore $\alpha[V^{(1)} - V(\tau)] = O\left((\Delta \tau)^2\right)$. As we need just the first order approximation of \eqref{expSpl}, this term in \eqref{trickNeg1} can be omitted. This gives rise to \eqref{ap1}.

\paragraph{Non-negativity} Now prove non-negativity of the solution. Consider first iteration at $k=0$, so the rhs of the first line in \eqref{ap1} can be written as $M_R V^0$, where
\[
M_R \equiv PQ - Q\sdt \rho_{s,v} W(S) A_{S}^B + P\sdt W(v) A_{v}^F,
\]
\noindent is an upper triangular block matrix of the size $N_1 N_2$.
Based on the definitions of the discrete operators $A^F, A^B$ given right before the Proposition~\ref{propPos}, one can see that matrix $M_R$ has all non-negative elements outside of the main diagonal. The elements at the main diagonal $d_0(M_R)$ read
\[
d_0(M_R) = PQ - \sdt \left[\dfrac{Q \rho_{s,v} W(S)}{h_S} + \dfrac{P W(v)}{h_v}\right],
\]
\noindent and are positive if
\[
PQ > \sdt \left[\dfrac{Q \rho_{s,v} W(S)}{h_S} + \dfrac{P W(v)}{h_v}\right].
\]
This can be easily achieved if we put $Q = \beta \sdt/h_v, \ P = \beta \sdt/h_S$. The coefficient $\beta$ must be chosen to obey the condition:
\begin{equation} \label{beta2}
\beta > \max_{S,v}[W(v) + \rho_{s,v} W(S)] > 0,
\end{equation}
\noindent which guarantees that $d_0(M_R) > 0$. Since we require that in the proposition statement, the rhs of \eqref{ap1} is a non-negative vector.

To prove the non-negativity of the solution, consider first the second line in \eqref{ap1}. We need to show that the matrix $M_R^S \equiv P I_S - \sdt \rho_{s,v} W(S) A^F_{2,S}$ is an EM-matrix, see Appendix A in \cite{Itkin2014a}. This can be done similar to the proof of Lemma A.2 in \cite{Itkin2014a}, if one observes that the diagonal elements of $M_R^S$ are positive, i.e.
\begin{equation} \label{dii}
d_{i,i}(M_R^S) = \dfrac{\sdt}{h_S}\left( \beta + \dfrac{3}{2}\rho_{s,v} W(S_i)\right) > 0, \qquad i=1,...,N_1.
\end{equation}
Since $M_R^S$ is an EM-matrix, its inverse is a non-negative matrix, therefore a product of a non-negative matrix and a non-negative vector results in a non-negative vector. Therefore, the non-negativity of the solution is proved.

\paragraph{Convergence of iterations} Since $M_R^S$ is an EM-matrix, all its eigenvalues are non-negative. Also, since this is an upper triangular matrix, its eigenvalues are $ d_{i,i}(M^S_R), \ i=1,...,N_1$. Also, by the properties of an EM-matrix, matrix $\parallel (M_R^S)^{-1} \parallel$ is non-negative, with the eigenvalues $\lambda_i = 1/d_{i,i}(M^S_R), \ i \in [1,...,N_1]$.

Now due to \eqref{beta2} introduce a coefficient $\gamma > 1$ such that
\begin{equation} \label{gamma}
\beta = \gamma \frac{3}{2}\max_{S,v}[W(v) + \rho_{s,v} W(S)].
\end{equation}
From \eqref{dii} we have
\[ d_{i,i}(M_R^S)  > \dfrac{3}{2}\rho_{s,v} W(S_i) \dfrac{\sdt}{h_S}\left( \gamma +1  \right). \]
Thus, it is always possible to provide the condition $d_{i,i}(M_R^S) > 1$ by an appropriate choice of $\gamma$. Accordingly, this gives rise to the condition $|\lambda_i| < 1, \ \forall i \in [1,...,N]$. The latter means that the spectral norm $\parallel (M_R^S)^{-1} \parallel < 1$, and, thus, the map $V^k \rightarrow V^{k+1}$ is contractual. This is the sufficient condition for the Picard iterations in \eqref{ap1} to converge. Unconditional stability follows. Other details about EM-matrices and necessary lemmas again can be found in \cite{Itkin2014a}.

For the first line of \eqref{ap1} we claim the same statement, i.e., that the matrix $M_R^v$ is an EM-matrix. The main diagonal elements of $M_R^v$ are also positive, namely
\[
d_{j,j}(M_R^v) = \dfrac{\sdt}{h_v}\left( \beta + \dfrac{3}{2}W(v_j)\right) > 0, \qquad j=1,...,N_2.
\]
The remaining proof again can be done based on definitions and Lemma A.2 in \cite{Itkin2014a}.

Thus, based on these two steps the coefficient $\gamma$ has to be chosen to provide
\[ \dfrac{3}{2}\max_{S,v}[W(v) + \rho_{s,v} W(S)] \dfrac{\sdt}{h_S}\left( \gamma +1  \right) > 1. \]

Since both steps on \eqref{ap1} converge in the spectral norm, and are unconditionally stable,
the unconditional stability and convergence of the whole scheme follows. It also follows that the whole scheme preserves non-negativity of the solution.

\paragraph{Spatial approximation} In \eqref{ap1} the lhs is approximated with the second order in $h_S$, while the first line in the rhs part uses the first order approximation of the first derivative. As $\triangledown_S = A_S^B + O(h_S)$, and in the first line of the rhs of \eqref{ap1} we have a product $\sdt \triangledown_S$, the order of the ignored terms is $O(\sdt h_S)$. So, rigorously speaking, the whole scheme \eqref{ap1} provides this order of approximation.

\section{Proof of Proposition \protect\ref{propPos2} \label{appPos2}}
As compared with the Proposition~\ref{propPos}, this scheme has the only modification. Namely, instead of the first step in \eqref{ap1}
\begin{align*}
\Big(Q I_v + \sdt W(v) A^B_{2,v}\Big) V^* &= \alpha^{+}V(\tau) - V^k, \\
\alpha^{+} = (PQ+1) I &- Q\sdt \rho_{s,v} W(S) A_{S}^B + P\sdt W(v) A_{v}^F. \nonumber
\end{align*}
\noindent we now use
\begin{align} \label{ap13}
\Big(Q I_v + \sdt W(v) A^B_{2,v}\Big) V^* &= \alpha_2^+ V(\tau) - V^k, \\
\alpha^{+}_2 = (PQ+1) I &- Q\sdt \rho_{s,v} W(S) A_{2,S}^B + P\sdt W(v) A_{2,v}^F. \nonumber
\end{align}
Below we want to prove that $V = \alpha_2^+ V(\tau) - V^k$ is a positive vector.

Suppose we start iterations with $V^{0} = V(\tau)$. Then
\begin{align} \label{rhs1}
V &= V_1 V(\tau) + V_2 V(\tau), \\
V_1 &= \dfrac{1}{2}PQ - Q\sdt \rho_{s,v} W(S) A_{2,S}^B, \qquad
V_2 = \dfrac{1}{2}PQ + P\sdt W(v) A_{2,v}^F. \nonumber
\end{align}
The vector $U_1 = V_1 V(\tau)$ can be computed in a loop on $v_i, i=1,...,N_2$. In other words, if $V(\tau)$ is a $N_1 \times N_2$ matrix where the rows represent the $S$ coordinate and the columns represent the $v$ coordinate, each $j$-th column of $U_1$ is a product of matrix $V_1$ and the $j$-th column of $V(\tau)$. By analogy, the vector $U_2 = V_2 V(\tau)$ can be computed in a loop on $S_i, \ i=1,...,N_1$.

Matrix $V_1$ is lower tridiagonal matrix with positive main and first diagonal and negative second diagonal. For instance, on a uniform grid the elements of these diagonals are
\begin{align*}
d_{0,i} &= \dfrac{1}{2}PQ - \dfrac{3}{2 h_S} Q\sdt \rho_{s,v} \sigma(S_i) S_i^c, \\
d_{-1,i} &= \dfrac{4}{2 h_S} Q\sdt \rho_{s,v} \sigma(S_i) S_i^c, \qquad
d_{-2,i} = -\dfrac{1}{2 h_S} Q\sdt \rho_{s,v} \sigma(S_i) S_i^c. \nonumber
\end{align*}
$V_1$ can be made highly diagonal dominant by choosing $\gamma$ in \eqref{gamma} big enough. On the other hand, each column of $V(\tau)$ is a vector of the option prices on the grid in $S$, i.e. we expect it to be relatively smooth. Therefore, the product of this column with $V_1$ is expected to be positive.

One can also regard to the following intuition. As far as $U_2$ is concerned, it can be observed that $A_{2,v}^F$ in the expression \eqref{rhs1} for $V_2$ is an option Vega, so it has to be positive up to some computational errors. The first term in $V_2$ is big enough to compensate for this possible negative errors. For $U_1$ observe that with allowance for the statement of this Proposition, matrix $V_1$ can be represented as
\[ V_1 > Q \dfrac{\sqrt{\dtau}}{h_S} \rho_{s,v} W(S) \left[ \dfrac{3}{4}\gamma - h_S A_{2,S}^B \right]. \]
Therefore, by taking $\gamma > 4/3$ we get
\[ U_1 > Q \dfrac{\sqrt{\dtau}}{h_S} \rho_{s,v} W(S) \left[ V(\tau) - h_S A_{2,S}^B V(\tau) \right]
\approx Q \dfrac{\sqrt{\dtau}}{h_S} \rho_{s,v} W(S) \left[C - h_S \fp{C}{S}\right] > 0, \]
\noindent with $C$ being the  option price.

Since $V_1$ is a lower tridiagonal matrix, the complexity of computing $V_1 V(\tau) $
is linear in $N_1 N_2$.

Same arguments can be applied to the product of $V_2$ and rows of $V(\tau)$.

The whole FD scheme in the Proposition~\ref{propPos2} in addition to \eqref{ap13} also includes the second step which is same as in the Proposition~\ref{propPos}. Since this step has the second order approximation in spatial variables, the whole scheme also provides the second order. The prove of convergence of the whole scheme is similar to that in Proposition~\ref{propPos}.

\end{document}